\def\i{\bf{\rm i}}
\newtheorem{thm}{Theorem}[]
\newtheorem{definition}[thm]{Definition}
\newtheorem{example}[thm]{Example}
\newtheorem{lemma}[thm]{Lemma}
\newtheorem{corollary}[thm]{Corollary}
\def \q{t}
\def \e{{\rm e}}
\def \i{{\mathrm i}}
\titleformat{\section}[block]%
  {\normalfont\Large\bfseries} % Style of the section
  {\thesection.} % Section number formatting
  {0.5em} % Space between number and title
  {} % Title format
\titlespacing*{\section}{0pt}{3ex}{2ex} % Spacing around the section
\titleformat{\subsection}[block]{\normalfont\bfseries}{\thesubsection}{0.6em}{}
\newenvironment{customproof}[1][]
    {\noindent\textit{Proof of #1. }\ignorespaces}
    {\hfill $\Box$\vskip 5pt}
\begin{document}
\title{Triangular isomonodromic solutions to a Fuchsian system from superelliptic curves}

\author[Anwar Al Ghabra, Benjamin Piché, Vasilisa Shramchenko]{Anwar Al Ghabra$^{1}$, Benjamin Piché$^{1}$, Vasilisa Shramchenko$^{1}$}
\date{\today}
\maketitle

\footnotetext[1]{Department of mathematics, University of
	Sherbrooke, Sherbrooke, Quebec, Canada. E-mail: {\tt Mouhammed.Anwar.Al.Ghabra@Usherbrooke.ca},  {\tt Benjamin.Piche@USherbrooke.ca},  {\tt Vasilisa.Shramchenko@Usherbrooke.ca}}

\let\thefootnote\relax

\begin{abstract}
We give fundamental solutions of arbitrarily sized matrix Fuchsian linear systems, in the case where the coefficients $B^{(i)}$ of the systems are matrix solutions  of the Schlesinger system that are upper triangular, and whose eigenvalues follow an arithmetic progression of a rational difference. The values on the superdiagonals of the matrices $B^{(i)}$ are given by contour integrals of meromorphic differentials defined on Riemann surfaces obtained by compactification of superelliptic curves. We show that our fundamental solutions are isomonodromic by obtaining their monodromy matrices. 
\\

\hskip -0.4cm
Keywords: Fuchsian linear systems, Riemann-Hilbert problem, isomonodromic deformations, Schlesinger systems, Riemann surfaces, algebraic curves, superelliptic curves. 

\end{abstract} 

\bigskip

%\tableofcontents

\noindent 
\section{Introduction}

For a fixed set of $N$ distinct points $a_1, \dots, a_N$ in the complex plane,  consider a matrix Fuchsian linear equation
\begin{equation}
\label{fuchsyst1}
\frac{\mathrm{d}\Phi}{\mathrm{d}z} = \sum_{i=1}^{N}\frac{B^{(i)}}{z-a_i}\Phi
\end{equation}
for a $p\times p$ matrix $\Phi(z)$ with $z\in\mathbb CP^1.$ Here the matrices $B^{(i)}$ are independent of $z$ but may depend on $a_1, \dots, a_N$. Generically,  solutions to this equation have singularities at $z=a_i$ and $z=\infty$ (unless $\sum_{i=1}^{N}{B^{(i)}}=0$) and, under analytical continuation along an element  of $\pi_1(\mathbb C\setminus\{a_1, \dots, a_N\}, z_0)$,  a fundamental solution $\Phi(z_0)$ transforms to another  fundamental  solution related to $\Phi(z_0)$ by the right multiplication with a constant matrix, called the {\it monodromy matrix} corresponding to the chosen loop. For fixed $a_1, \dots, a_N$,  monodromy matrices depend only on the homotopy class of the corresponding loops.  One thus obtains a {\it monodromy representation} of the fundamental group of the complex plane punctured at $a_i$, that is a homomorphism $\rho: \pi_1(\mathbb C\setminus\{a_1, \dots, a_N\}, z_0) \to GL(p,\mathbb C),$ see \cite{AnosovBolibruch1994}. 
The monodromy representation can also be seen as the holonomy  representation of the flat meromorphic connection on the trivial rank $p$ vector bundle over $\mathbb CP^1\setminus\{a_1, \dots, a_N, \infty\}$ whose connection form is $\sum_{i=1}^{N}\frac{B^{(i)}}{z-a_i}dz.$

\smallskip

Once a fundamental solution to \eqref{fuchsyst1} is found, one may obtain its monodromy matrices. 
The inverse problem, that of reconstructing a function $\Phi(z)$ from its monodromies, is referred to as a Riemann-Hilbert or inverse monodromy problem. Alternatively, it may be reformulated as a problem of finding two matrix functions, each of which is holomorphic either inside or outside of a closed contour passing through the points $a_1, \dots, a_N$  in the complex plane, such that the two functions are related by the multiplication with a ``jump'' matrix on the contour. The monodromy matrices associated with the loops around $a_i$ may be combined to obtain this jump matrix \cite{BolibruchRH,  Its}. 

\smallskip

Riemann-Hilbert problems are important in mathematics from the theoretical point of view, see the review \cite{Its}. At the same time, they play an important role in various applied mathematics contextes, see for example \cite{Baddoo, Kisil, Noble, Posson} and references therein. The term {\it Riemann-Hilbert problem} refers also to the 21st problem from Hilbert's list, that  of reconstructing a Fuchsian equation \eqref{fuchsyst1} from a collection of monodromy matrices associated with a set of points $a_1, \dots, a_N\in\mathbb C$,  \cite{BolibruchRH, AnosovBolibruch1994}, see also \cite{Rogo},  review \cite{GP} and references therein. 

\smallskip

Closely related to the Riemann-Hilbert problems is the search for isomonodromic deformations of Fuchsian systems, that is the question of finding the coefficients $B^{(i)}$ in  \eqref{fuchsyst1} for which there exist fundamental solutions of  \eqref{fuchsyst1} the monodromy matrices of which are independent of small variations of positions of the Fuchsian singularities, that  is variations of the set $\{a_1, \dots, a_N\}$. One way of finding such matrices $B^{(i)}$ is by solving the Schlesinger system \cite{Schlesinger1912, BolibruchFuchsian}, a system of nonlinear differential equations for $B^{(i)}$ with respect to the variables $a_1, \dots, a_N$ giving a sufficient condition for the resulting family of Fuchsian system to be isomonodromic, that is to share the same monodromies for $a=(a_1, \dots, a_N)$ in some set $D\subset \mathbb C^N$ not intersecting the diagonals of $\mathbb C^N$.

\smallskip

In the case $p=2$ and $N=3$, setting by a M\"obius transformation $\{a_1, a_2, a_3\}=\{0,1,x\}$,  the Schlesinger system reduces to a Painlev\'e-VI equation with parameters related to the eigenvalues of the matrices $B^{(i)}$, see for example \cite{Guzzetti}.

\smallskip

Solutions to the Schlesinger system are constructed in \cite{DIKZ, KK, DS1, DS0, DS2}  for $p=2$, in \cite{Go, GL} for $p=2$ and $p=3$ and
\cite {EG, Kor, DragGontShram}  for an arbitrary $p$. Most of these solutions are constructed in terms of functions defined on Riemann surfaces and over families of Riemann surfaces. The surfaces are obtained 
by considering coverings of $\mathbb CP^1$ ramified over the set of Fuchsian singularities of the associated system \eqref{fuchsyst1}, that is over the set $\{a_1, \dots, a_N\}$ and possibly the point at infinity. Allowing the {\it branching set} to vary, one considers a family of Riemann surfaces naturally associated with system \eqref{fuchsyst1} and thus with the Schlesinger system. 
 
 \smallskip
 
Solving the Fuchsian system \eqref{fuchsyst1} having coefficients $B^{(i)}$ obtained as solutions to the Schlesinger system in the algebro-geometric form in the above references is typically quite difficult. One exception being \cite{Kor} where a Riemann-Hilbert problem with regular singularities was solved for quasi-permutation monodromy matrices (independent of positions of regular singularities), after which the corresponding solution to the Schlesinger system was obtained. 

\smallskip

The goal of our paper is to solve systems \eqref{fuchsyst1} for the  coefficients $B^{(i)}$ obtained in \cite{DS2, DragGontShram}. 
The class of solutions of the Schlesinger system constructed in \cite{DS2, DragGontShram} are upper triangular and have a relatively simple form due to an additional assumption that the eigenvalues of each matrix $B^{(i)}$ form an arithmetic progression with a rational difference, the difference being the same for all $i=1, \dots, N.$ The eigenvalues of $B^{(i)}$ are called {\it exponents} of the associated Fuchsian system. Triangular solutions to Schlesinger systems are also investigated in \cite{DM2} and \cite{Go, GL}. 

\smallskip

We obtain fundamental solutions to systems \eqref{fuchsyst1} for all coefficients $B^{(i)}$ found in \cite{DragGontShram}. Our solutions are written in terms of integrals over closed contours of differentials defined on algebraic curves. The algebraic curves in question are presented as coverings of the Riemann sphere ramified over the set of singularities of the Fuchsian system, that is over the set $\{a_1, \dots, a_N, \infty\}$; the same curves were used in  \cite{DragGontShram} for construction of solutions $B^{(i)}$  to the Schlesinger system.  We show that the fundamental solutions we obtain are isomonodromic, that is have monodromy matrices independent of small variations of the positions of singularities of \eqref{fuchsyst1}. Moreover, their monodromy matrices may be written using a pattern similar to the one used to write expressions of the solutions themselves. This pattern involves sums over partitions of integers. 

\smallskip

Besides giving explicit algebro-geometric expressions for solutions to a class of Fuchsian systems,  our solutions might shed some light on the inverse monodromy correspondence for a larger class of monodromy data  due to the close relationship between the obtained expressions for monodromy matrices and for the corresponding fundamental solutions to \eqref{fuchsyst1}. 

\smallskip

The paper is organized as follows. In Section \ref{sect_setup} we recall the construction of matrices $B^{(i)}$ from \cite{DragGontShram} and the relevant algebraic curves. In Section \ref{sect_main} we present our main result giving solutions to the isomonodromic Fuchsian system with coefficients from \cite{DragGontShram}. In Section \ref{sect_monodromies} we obtain monodromy matrices of our solutions and show that they are independent of the positions of Fuchsian singularities. In Section \ref{sect_examples}, we consider two particular cases of integration contours leading to solutions of the Fuchsian system of the form $MD$ with a rational matrix $M$ and a diagonal matrix $D$ having a simple structure.

\section{Schlesinger system and superelliptic curves}
\label{sect_setup}

Let $B^{(1)}, B^{(2)}, \ldots, B^{(N)}$ be  $p\times p$ matrices depending on some complex parameters $a_1, \ldots, a_N$ defined on a ball  $D \subset \mathbb{C}^N \setminus \{(a_1, \ldots, a_N)\,\, | \,\, a_i = a_j \, \text{for some }\, i \neq j\}$. The Schlesinger system for matrices $B^{(j)}$ has the form
\begin{equation} \label{Schlesinger}
    \mathrm{d}B^{(i)} = - \sum_{j = 1, j \neq i}^{N} \frac{[B^{(i)}, B^{(j)}]}{a_i - a_j} \mathrm{d}(a_i - a_j),
\end{equation}
with $i = 1, \ldots, N.$

As a partial differential equations system, the Schlesinger system is given by
\begin{equation}
\label{introSchle}
\begin{dcases*}     
\frac{\partial B^{(i)}}{\partial a_j} = \frac{[B^{(i)}, B^{(j)}]}{a_i - a_j}, \quad  i \neq j; \\ 
\frac{\partial B^{(i)}}{\partial a_i} = - \sum_{j = 1, j\neq i}^{N}\frac{[B^{(i)}, B^{(j)}]}{a_i - a_j}. 
\end{dcases*} 
\end{equation}
The Schlesinger system was derived \cite{Schlesinger1912} as a condition of isomonodromy  for the Fuchsian system \eqref{fuchsyst1} for a matrix function $\Phi(z) \in M_p(\mathbb{C})$.
In other words, if the coefficients of system \eqref{fuchsyst1}, the matrices  $B^{(j)}\in M_p(\mathbb{C})$, satisfy system \eqref{introSchle} as functions of $a_1, \dots, a_N$, then the monodromy matrices of solutions $\Phi$ of \eqref{fuchsyst1} on closed loops around $a_j$ in the $z$-plane do not change under small perturbations of positions of $a_1, \dots, a_N$. 

In \cite{DragGontShram}, some upper triangular matrix solutions to the Schlesinger system \eqref{introSchle} were found in terms of contour integrals  defined on the compact Riemann surfaces associated with superelliptic curves. In this paper, we give the corresponding upper triangular solutions of the Fuchsian system \eqref{fuchsyst1}.  The superelliptic curves from \cite{DragGontShram} are given by the equation
\begin{equation}
\label{curves}
\Gamma_a = \{(\zeta,w) \in \mathbb{C}^2 \, | \, w^m = \prod_{i=1}^N (\zeta-a_i)\}
\end{equation}
and we consider their family
parametrized by  $a = (a_1, \ldots, a_N)\in D$ with some fixed positive integers $m$ and $N$. The associated compact Riemann surfaces, denoted by $X_a$, are obtained by projectivizing and desingularizing the curves \eqref{curves}, see \cite{Kirwan92} and  see also \cite{DragGontShram,Piche24} for application to superelliptic curves. Note that the algebraic curves \eqref{curves} are nonsingular. The associated projective curves $\hat{\Gamma}_a \subset \mathbb{C}P^2$ may be singular at the points at infinity denoted collectively by $\{\infty\}$. 
By the theorem of resolution of singularities (\cite{Kirwan92}, Chapter 7) there exists a holomorphic map $\pi: X_a \rightarrow \mathbb{C}P^2$ with  $\pi(X_a) = \hat{\Gamma}_a$ and such that the map
\begin{equation*}
    \pi: X_a \, \backslash \, \pi^{-1}(\{\infty\}) \rightarrow \hat{\Gamma}_a \, \backslash \, \{\infty\}
\end{equation*}
is biholomorphic. 
Given that the algebraic curves \eqref{curves} are non-singular, the map $\pi$ allows us to identify the points of the Riemann surface $X_a$ with those of the  curve $\Gamma_a$  as follows. We say that $P=(\zeta, w)$ is a point of the surface $X_a$ if $P=\pi^{-1}([\zeta:w:1]).$ For the points at infinity on $X_a$ we use the notation $\{\infty_1, \dots, \infty_s\}:=\pi^{-1}(\{\infty\})$. Note that the number of points at infinity is the greatest common divisor of $N$ and $m$, that is $s={\rm gcd}(m,N)$, and the genus of $X_a$ is $ g(X_a)=\frac12\bigl((m-1)(N-1)-s+1\bigr)$
 \cite{DragGontShram}.

The projection to the first coordinate $\zeta$ is a holomorphic function on the algebraic curve $\zeta:\Gamma_a \to \mathbb C$. It extends naturally to a meromorphic function on the Riemann surface $\zeta:X_a \to \mathbb CP^1$ by setting $\zeta(\zeta,w)=\zeta$ and $\zeta(\infty_j)=\infty$ with a slight abuse of notation. This way we can identify the Riemann surface $X_a$  with the surface of the ramified covering $(X_a, \zeta)$ of degree $m$ ramified over the points $a_1, \dots, a_N, \infty$ called the branch points of the covering. We denote the corresponding ramification points by $P_{a_i}=(a_i, 0)$ for $i=1, \dots, N$ and by $\infty_j$ as above for $j=1,\dots, s.$

The structure of the ramified covering $\zeta:X_a \to \mathbb CP^1$ induces the {\it standard local coordinates} on the surface $X_a$ as follows \cite{DragGontShram}:
\begin{align}
\label{coordinates}
&\xi_{a_k}(P)=(\zeta(P) - a_k)^{\frac{1}{m}}, &&\mbox{if}\quad P\sim P_{a_k}, \quad\mbox{with}\quad k=1, \dots, N\,,
\nonumber
\\
& \xi_{\infty_j}(P)= (\zeta(P))^{-\frac{1}{m_1}}, &&\mbox{if}\quad P\sim \infty_j, \quad\mbox{ where } m=sm_1.
\\
& \xi_Q(P)=\zeta(P)-\zeta(Q), &&\mbox{if}\quad P\sim Q \quad\mbox{and $Q$ is a regular point.}
\nonumber
\end{align}
We thus consider the family of compact {\it superelliptic} Riemann surfaces $X_a$, $a\in D,$ associated with the family of algebraic curves \eqref{curves}. Functions and differentials on the surfaces $X_a$ may also be seen as functions over the family of surfaces, that is functions of $a=(a_1, \dots, a_N).$

The solutions from  \cite{DragGontShram} are constructed using the following meromorphic differentials on $X_a$ defined for any non-zero integer $n$ coprime to $m$ of \eqref{curves}:
\begin{equation}\label{diffmero}
    \Omega_{i}^{(j)}(a) := \frac{w^{jn} \mathrm{d}\zeta}{\zeta - a_{i}}, \quad \text{with}\quad j = 1, \ldots, p-1, \quad\text{and}\quad i=1, \dots, N.
\end{equation}
Recall that $\zeta$ is a meromorphic function on $X_a$ discussed above. In the same way, $w$ is a meromorphic function on $X_a$ extended from $w:\Gamma_a\to \mathbb C$ by setting $w(\infty_k)=\infty$ for $k=1, \dots, s.$
\begin{thm}
\label{thmDGS}
\cite{DragGontShram}
Let $X_a$ with $a=(a_1, \dots, a_N)\in D$ be a family of compact Riemann surfaces defined above corresponding to the family of algebraic curves \eqref{curves}. 
Let $m$ and $N$ be  natural integers from \eqref{curves} and let $n \in \mathbb{Z}$ be  coprime to $m$. Let $\Omega_{i}^{(j)}(a)$ be the meromorphic differential defined by \eqref{diffmero} on $X_a$. Then a set of upper triangular $p\times p$ matrices $B^{(i)}$, $i = 1, . . . , N$ defined by the following three conditions provides a solution to the Schlesinger system \eqref{introSchle}:
\begin{itemize}
 \item $B^{(i)}_{kl}=0$ if $k>l;$
\item
The eigenvalues of each matrix $B^{(i)}$, $i = 1, . . . , N$, satisfy 
\begin{equation}
\label{condetoile}
B^{(i)}_{kk} -B^{(i)}_{k+1,k+1} = \frac{n}{m}, \quad \text{with } k = 1, \ldots , p - 1;
\end{equation} 
\item The entries on the superdiagonals are given by
\begin{equation}
\label{bi}
B^{(i)}_{kl} = \oint_{\gamma_{l - k}}  \Omega_{i}^{(l-k)}(a) = \oint_{\gamma_{l - k}} \frac{w^{(l-k)n} \mathrm{d}\zeta}{\zeta - a_{i}}, \quad l > k,
\end{equation}
 where $\gamma_1, \ldots, \gamma_{p-1}$ are linear combinations with complex coefficients of closed contours on the Riemann surface $X_a$ not passing through a pole of the corresponding $\Omega_{i}^{(j)}(a)$ and independent of small variations of $a\in D$. More precisely,  $\gamma_1, \ldots, \gamma_{p-1}$  are elements of
\begin{enumerate}
\item[\rm (a)] $H_1(X_a,\mathbb C)$ if $m$, $N$ are coprime and $n>0$,
\item[\rm (b)] $H_1(X_a\setminus\{\infty_1,\ldots,\infty_s\},\mathbb C)$ if $m$, $N$ are not coprime, $s={\rm gcd}(m,N)$, and $n>0$.
\item[\rm (c)]  $H_1(X_a\setminus\{P_{a_1},\ldots,P_{a_N}\},{\mathbb C})$ if $n<0\,.$
\end{enumerate}
\end{itemize}
\end{thm}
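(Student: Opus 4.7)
The plan is to verify the Schlesinger equations \eqref{introSchle} directly for the prescribed matrices, exploiting the upper triangular structure to reduce the problem to computations of contour integrals of differentials on $X_a$. As a preliminary observation, since all $B^{(i)}$ are upper triangular, every commutator $[B^{(i)}, B^{(j)}]$ is strictly upper triangular, so the diagonal entries of both sides of \eqref{introSchle} vanish. Hence the assumption that each $B^{(i)}_{kk}$ is an $a$-independent constant (consistent with the arithmetic-progression condition \eqref{condetoile}) is compatible with the system, and we fix this choice once and for all.

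Next I would compute the $(k, k+d)$ entry of $[B^{(i)}, B^{(j)}]$ for $d \geq 1$. By \eqref{bi}, the entry $B^{(i)}_{k,k+d}$ depends only on $d$; denote it $\beta^{(i)}_d$. Expanding and isolating the $s=0$ and $s=d$ terms in
\[
[B^{(i)}, B^{(j)}]_{k, k+d} = \sum_{s=0}^d \bigl(B^{(i)}_{k,k+s} B^{(j)}_{k+s, k+d} - B^{(j)}_{k,k+s} B^{(i)}_{k+s, k+d}\bigr),
\]
and using $B^{(i)}_{kk} - B^{(i)}_{k+d,k+d} = dn/m$, yields
\[
[B^{(i)}, B^{(j)}]_{k, k+d} = \tfrac{dn}{m}\bigl(\beta^{(j)}_d - \beta^{(i)}_d\bigr) + \sum_{s=1}^{d-1}\bigl(\beta^{(i)}_s \beta^{(j)}_{d-s} - \beta^{(j)}_s \beta^{(i)}_{d-s}\bigr).
\]
The quadratic sum vanishes by the substitution $s \mapsto d-s$, leaving a purely linear expression.

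The off-diagonal equation of \eqref{introSchle} would then follow from differentiating under the integral sign (legitimate because $\gamma_d$ is chosen independent of small variations of $a$) after computing $\partial_{a_j} w$. From $w^m = \prod_i(\zeta - a_i)$ one gets $\partial_{a_k} w = -w/[m(\zeta - a_k)]$, hence for $j \neq i$ a partial-fraction decomposition gives
\[
\partial_{a_j} \Omega_i^d = -\frac{dn}{m(a_i - a_j)}\bigl(\Omega_i^d - \Omega_j^d\bigr).
\]
Integrating over $\gamma_d$ and matching with the commutator computation above settles the equations $\partial_{a_j} B^{(i)} = [B^{(i)}, B^{(j)}]/(a_i - a_j)$ for $i \neq j$.

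Finally, the remaining equation (the one with $\partial_{a_i} B^{(i)}$) is equivalent, modulo the off-diagonal equations already proved, to the translation invariance $\sum_{j=1}^N \partial_{a_j} B^{(i)} = 0$. The main technical point, which I expect to be the key step, is to recognize
\[
\sum_{j=1}^N \partial_{a_j} \Omega_i^d = d\!\left(-\,\frac{w^{dn}}{\zeta - a_i}\right),
\]
which follows from the identity $\sum_{j=1}^N d\zeta/(\zeta - a_j) = m\, dw/w$ (itself a consequence of logarithmic differentiation of $w^m = \prod_j(\zeta - a_j)$) combined with the direct treatment of the double pole appearing in $\partial_{a_i} \Omega_i^d$. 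Since $w^{dn}/(\zeta - a_i)$ is a meromorphic function on $X_a$ whose poles lie at $P_{a_i}$ and/or at $\{\infty_1,\dots,\infty_s\}$ depending on the sign of $n$ and on $\gcd(m,N)$, each of the three choices (a), (b), (c) of the ambient space for $\gamma_d$ ensures that the contour avoids these poles; hence the period of this exact differential vanishes, completing the proof.
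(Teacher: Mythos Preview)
The paper does not contain a proof of this statement: Theorem~\ref{thmDGS} is quoted from \cite{DragGontShram} and stated without proof here, so there is no argument in the present paper to compare your proposal against. That said, your reconstruction is essentially correct. The commutator computation, the cancellation of the quadratic sum under $s\mapsto d-s$, the partial-fraction identity for $\partial_{a_j}\Omega_i^d$ when $j\neq i$, and the reduction of the second Schlesinger equation to the translation invariance $\sum_j\partial_{a_j}B^{(i)}=0$ are all right, and the identification of $\sum_j\partial_{a_j}\Omega_i^d$ with $d\bigl(-w^{dn}/(\zeta-a_i)\bigr)$ is exactly the key point.

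There is one imprecision in the last paragraph. For $n<0$ the function $w^{dn}/(\zeta-a_i)$ has poles at \emph{all} ramification points $P_{a_1},\dots,P_{a_N}$, not only at $P_{a_i}$; this is harmless because in case~(c) the contour avoids all of them anyway. More importantly, for $n>0$ the function can still have a pole at $P_{a_i}$ (whenever $dn<m$), and the hypotheses in cases~(a) and~(b) do \emph{not} force $\gamma_d$ to avoid $P_{a_i}$, since $\Omega_i^d$ itself is holomorphic there. The fix is immediate: either deform the representative cycle of $\gamma_d$ off $P_{a_i}$ without changing its homology class (legitimate precisely because $\Omega_i^d$ is holomorphic there), or simply observe that an exact meromorphic differential $df$ has all residues equal to zero, so its period over any closed cycle on $X_a$ vanishes regardless of how the cycle sits relative to the poles of $f$. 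With this adjustment your proof goes through.
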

Note that for some contours, the integrals in \eqref{bi} vanish giving possibly trivial solutions to the Schlesinger system. If $m=1$ and $n>0$ then all the solutions are trivial as well, that is constant  diagonal matrices. If the contours are chosen to encircle poles of the respective $\Omega_{i}^{(j)}(a)$, solutions to the Schlesinger system become rational functions of $a_1, \dots, a_N$, see \cite{DragGontShram} and Section \ref{sect_examples} below. 

Solutions given by Theorem \ref{thmDGS} are such that $B^{(\infty)}:=-\sum_{i=1}^N B^{(i)}$ are non-zero constant diagonal matrices, see \cite{DragGontShram}. This implies that the connection form $\sum_{i=1}^{N}\frac{B^{(i)}}{z-a_i}dz$ has a simple pole at $z=\infty$ with residue $B^{(\infty)}.$

\section{Solutions of upper triangular Fuchsian systems}
\label{sect_main}

In order to simplify the presentation, let us introduce the following notation.
\begin{definition}
\label{dfnkappa}
Let $n \neq 0$ and $m > 0$ be coprime integers, $r = 1, \ldots, p-1$, where $p$ is an arbitrary natural number. 
 Let $z \in \mathbb C$ be fixed, and let $X_a$ be a Riemann surface associated with the superelliptic curve ${\Gamma}_a$ \eqref{curves}. Let $\zeta$ and $w = \prod_{i=1}^N(\zeta - a_i)^{1/m}$ be the meromorphic functions on $X_a$ as  in Section \ref{sect_setup}, and let $\gamma_r$ be a closed contour on $X_a$ as in Theorem \ref{thmDGS}. Assume in addition that $\gamma_r$ does not pass by the points $P$ of the surface for which $\zeta(P)=z.$   We then define
\begin{equation}
\label{kappa}
\kappa_r := -\frac{m}{rn}\oint_{\gamma_r}\frac{w^{rn}}{\zeta - z}\mathrm{d}\zeta. 
\end{equation}
\end{definition}
The derivative of $\kappa_r$ with respect to $z \in \mathbb{C}$, denoted by $\kappa'_r$, is obtained in a straightforward way:
\begin{equation}
\label{derivativekappa}
 \kappa'_r := \frac{\mathrm{d}}{\mathrm{d}z}\kappa_r = -\frac{m}{rn}\oint_{\gamma_r}\frac{w^{rn}}{(\zeta - z)^2}\mathrm{d}\zeta .
\end{equation}
In order to simplify further calculations, let us prove the following lemma.
\begin{lemma}
\label{propcalculsimp}
 Let $j$ be a strictly positive integer, $\gamma_j$  a closed contour on the Riemann surface $X_a$, and $\zeta,$ $w = \prod_{i=1}^N(\zeta - a_i)^{1/m}$ meromorphic functions on $X_a$ as in Section \ref{sect_setup}. For  $\kappa_j$ given by Definition \ref{dfnkappa} and for its derivative $\kappa'_j$  \eqref{derivativekappa} with respect to the complex parameter $z$, we have
\begin{equation*}
\frac{-jn}{m}\kappa_j\sum_{i=1}^N\frac{1}{z - a_i} + \kappa'_j  = \sum_{i=1}^N\frac{1}{z - a_i}\oint_{\gamma_j}\frac{w^{jn}}{\zeta - a_i}\mathrm{d}\zeta.
\end{equation*}
\end{lemma}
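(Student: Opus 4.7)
The plan is to substitute the definitions of $\kappa_j$ and $\kappa'_j$ into the left-hand side and reduce the identity to a clean statement about contour integrals of meromorphic differentials on $X_a$. From Definition \ref{dfnkappa} we have $\frac{-jn}{m}\kappa_j = \oint_{\gamma_j}\frac{w^{jn}}{\zeta-z}d\zeta$, so after substitution the claim is equivalent to
$$
\sum_{i=1}^N \frac{1}{z-a_i}\oint_{\gamma_j}\!w^{jn}\!\left(\frac{1}{\zeta-z} - \frac{1}{\zeta-a_i}\right)\!d\zeta \;=\; \frac{m}{jn}\oint_{\gamma_j}\!\frac{w^{jn}}{(\zeta-z)^2}d\zeta.
$$

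Next I would use the elementary partial fraction identity $\frac{1}{\zeta-z} - \frac{1}{\zeta-a_i} = \frac{z-a_i}{(\zeta-z)(\zeta-a_i)}$. The factor $z-a_i$ cancels the $\frac{1}{z-a_i}$ outside the integral, and after swapping sum and integral the left-hand side becomes
$$
\oint_{\gamma_j}\frac{w^{jn}}{\zeta-z}\sum_{i=1}^N\frac{1}{\zeta-a_i}\,d\zeta.
$$
The crucial algebraic input is the logarithmic derivative of the defining equation $w^m=\prod_i(\zeta-a_i)$, which gives $\sum_{i=1}^N\frac{1}{\zeta-a_i} = \frac{m}{w}\frac{dw}{d\zeta}$. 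Substituting this converts the inner sum into $\frac{m}{jn}\cdot\frac{(w^{jn})'}{w^{jn}}$, so that the integrand becomes $\frac{m}{jn}\cdot\frac{(w^{jn})'}{\zeta-z}$.

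The last step is integration by parts on the closed contour $\gamma_j$. Since $\frac{w^{jn}}{\zeta-z}$ is a meromorphic function on $X_a$ (recall $w$ and $\zeta$ are meromorphic on $X_a$ and $\gamma_j$ avoids all its poles by hypothesis), the integral of its exterior derivative over the closed cycle $\gamma_j$ vanishes:
$$
\oint_{\gamma_j}d\!\left(\frac{w^{jn}}{\zeta-z}\right) = \oint_{\gamma_j}\frac{(w^{jn})'}{\zeta-z}d\zeta - \oint_{\gamma_j}\frac{w^{jn}}{(\zeta-z)^2}d\zeta = 0.
$$
Replacing the first integral with the second yields exactly $\frac{m}{jn}\oint_{\gamma_j}\frac{w^{jn}}{(\zeta-z)^2}d\zeta$, matching the right-hand side.

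I do not anticipate a serious obstacle: all steps are essentially formal manipulations of meromorphic integrals. The only point requiring a moment of care is the integration by parts on a closed cycle of a Riemann surface; one must note that $\frac{w^{jn}}{\zeta-z}$ is a well-defined single-valued meromorphic function on $X_a$ (because $jn\in\mathbb Z$ and $w$ is meromorphic on the desingularized surface), and that $\gamma_j$ avoids its poles by the assumption in Definition \ref{dfnkappa} and Theorem \ref{thmDGS}, so that Stokes' theorem applies without any residue contribution.
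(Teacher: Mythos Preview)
Your proof is correct and follows essentially the same approach as the paper: both substitute the definitions of $\kappa_j$ and $\kappa_j'$, use a partial-fraction identity to combine the $(\zeta-z)$ and $(\zeta-a_i)$ denominators, and then invoke the vanishing of $\oint_{\gamma_j} d\!\left(\frac{w^{jn}}{\zeta-z}\right)$ together with the logarithmic derivative of $w^m=\prod_i(\zeta-a_i)$. The only difference is cosmetic ordering---the paper keeps the $\kappa_j'$ term on the left and expands the exact differential directly, whereas you move it to the right and phrase the last step as integration by parts.
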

\begin{proof}
By Definition \ref{dfnkappa} and \eqref{derivativekappa}, we can write 
\begin{align}
\nonumber
\frac{-jn}{m}\kappa_j\sum_{i=1}^N\frac{1}{z - a_i} + \kappa'_j &=    \sum_{i=1}^N\left[\oint_{\gamma_j}\frac{w^{jn}\mathrm{d}\zeta}{\zeta - z} \frac{1}{z-a_i}\right] - \frac{m}{jn}\oint_{\gamma_j}\frac{w^{jn}}{(\zeta - z)^2}\mathrm{d}\zeta
\\
&=\sum_{i=1}^N\oint_{\gamma_j}\left(\frac{w^{jn}}{\zeta - z} + \frac{w^{jn}}{z - a_i}\right) \frac{\mathrm{d}\zeta}{\zeta - a_i} - \frac{m}{jn}\oint_{\gamma_j}\frac{w^{jn}}{(\zeta - z)^2}\mathrm{d}\zeta .
\label{lemmacalculsimp}
\end{align}
Due to the vanishing of the integral of an exact differential over a closed contour, for every $z$ we have
\begin{equation*}
 \oint_{\gamma_j} \mathrm{d}\frac{w^{jn}}{\zeta - z} = \oint_{\gamma_j} \mathrm{d}\frac{\prod_{i=1}^N(\zeta - a_i)^{jn/m}}{\zeta - z}
= \oint_{\gamma_j} \left(\frac{jn}{m}\frac{w^{jn}}{\zeta - z}\sum_{i=1}^N\frac{1}{\zeta - a_i} - \frac{w^{jn}}{(\zeta - z)^2} \right)\mathrm{d}\zeta=0,
\end{equation*}
which together with \eqref{lemmacalculsimp} concludes the proof.
\end{proof}

The next theorem states the main result of the present paper. 
\begin{thm}
\label{cjtpiche}
Let $p, \;n, \;m, \; N$ and $X_a, a\in D$  be as in Theorem \ref{thmDGS}. Let $z_0\in\mathbb C\setminus\{a_1, \dots, a_N\}$ and let $\gamma_1,\dots, \gamma_{p-1}$ be  closed contours on $X_a$ from Theorem \ref{thmDGS} not passing through any point $P\in X_a$ for which $\zeta(P)=z_0$. More precisely, let $\gamma_j$  from Theorem \ref{thmDGS} be seen as an element of
\begin{enumerate}
\item[\rm (a)] $H_1(X_a\setminus\{P\in X_a\,|\, \zeta(P)=z_0\},\mathbb C)$ if $m$, $N$ are coprime and $n>0$,
\item[\rm (b)] $H_1(X_a\setminus(\{\infty_1,\ldots,\infty_s\}\cup \{P\in X_a\,|\, \zeta(P)=z_0\}),\mathbb C)$ if $n>0$ and $s={\rm gcd}(m,N)>1$,
\item[\rm (c)]  $H_1(X_a\setminus(\{P_{a_1},\ldots,P_{a_N}\}\cup \{P\in X_a\,|\, \zeta(P)=z_0\}),{\mathbb C})$ if $n<0$,
\end{enumerate}
independent of small variations of $a\in D$. Let $\mathcal Z\subset\mathbb C\setminus\{a_1, \dots, a_N\}$ be a small neighbourhood of $z_0$ not intersecting projections on the $\zeta$-sphere of the contours $\gamma_1,\dots, \gamma_{p-1}$. 
For such a choice of the contour $\gamma_j$, let $\kappa_j$ be a function of $z\in\mathcal Z$ given by Definition \ref{dfnkappa}. 
Let $B^{(i)},\;i=1, \dots, N,$ be upper triangular $p \times p$ matrices given by Theorem \ref{thmDGS} with eigenvalues  
\begin{equation*}
\beta_k^{(i)}= B^{(i)}_{kk} = \theta_i - \frac{(k-1)n}{m}, \quad  k \in \{1, \ldots, p\}, 
\end{equation*}
where $\theta_i \in \mathbb{C}$. Then, for $z\in\mathcal Z$ a fundamental upper triangular matrix solution $\Phi(z)$ to the Fuchsian system \eqref{fuchsyst1} with coefficients being the matrices $B^{(i)}$, $i = 1, \ldots, N$, is given by
\begin{equation}
\label{productMD}
    \Phi = MD,
\end{equation}
where
\begin{equation}
\label{D}
    D = \mathrm{diag}\left(\prod_{i=1}^N(z-a_i)^{\beta_1^{(i)}}, \ldots, \prod_{i=1}^N(z-a_i)^{\beta_p^{(i)}}\right),
\end{equation}
and
\begin{equation}
\label{M}
    M_{kl} = \begin{dcases*}\sum_{q \, \vdash \, (l - k)}\,\prod_{j=1}^{l-k}\frac{\kappa_j^{\sigma_q(j)}}{\sigma_q(j)!} \quad &\text{when } $l>k$,\\ 
    \quad 1 \quad &\text{when } $l= k$,\\ 
    \quad 0 \quad &\text{when } $l < k$,
    \end{dcases*}
\end{equation}
where $q \vdash (l-k)$ represents the partitions of the integer $l - k$,  and $\sigma_q(j)$ is the number of repetitions of the part $j$ in the partition $q$.
\end{thm}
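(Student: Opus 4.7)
The plan is to write $\Phi=MD$, decompose $A(z):=\sum_{i=1}^N B^{(i)}/(z-a_i)=A_d(z)+A_u(z)$ into its diagonal and strictly upper-triangular parts, and reduce the matrix equation $\Phi'=A\Phi$ to a scalar recurrence that is tractable via Lemma~\ref{propcalculsimp} together with an exponential generating function for the entries of $M$.

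Since $D$ is diagonal with entries $\prod_i(z-a_i)^{\beta_k^{(i)}}$, a direct computation gives $D'=A_d D$, so $\Phi'=A\Phi$ is equivalent to $M'=[A_d,M]+A_uM$. A key observation is that the off-diagonal entry $B^{(i)}_{k,k+r}=\oint_{\gamma_r}\frac{w^{rn}}{\zeta-a_i}\,\mathrm{d}\zeta$ is independent of $k$, and likewise $M_{kl}$ depends only on $L:=l-k$ by its very definition. Setting $m_L:=M_{k,k+L}$, using the arithmetic progression $\beta_k^{(i)}-\beta_{k+L}^{(i)}=Ln/m$, and writing $S:=\sum_i 1/(z-a_i)$, the $(k,k+L)$-entry of $M'=[A_d,M]+A_uM$ reads
\begin{equation*}
m_L'\;=\;\frac{Ln}{m}\,S\,m_L\;+\;\sum_{r=1}^{L}\Bigl(\sum_{i=1}^{N}\frac{1}{z-a_i}\oint_{\gamma_r}\frac{w^{rn}}{\zeta-a_i}\,\mathrm{d}\zeta\Bigr)\,m_{L-r}.
\end{equation*}
Lemma~\ref{propcalculsimp} identifies each inner bracket with $-\frac{rn}{m}\kappa_r S+\kappa_r'$, reducing the whole claim to the purely combinatorial identity
\begin{equation*}
m_L'\;=\;\frac{n}{m}S\Bigl(L\,m_L-\sum_{r=1}^{L} r\,\kappa_r\,m_{L-r}\Bigr)\;+\;\sum_{r=1}^{L}\kappa_r'\,m_{L-r}.
\end{equation*}

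To dispatch this identity I would introduce the exponential generating function $F(t,z):=\exp\bigl(\sum_{j\ge 1}\kappa_j(z)\,t^j\bigr)$, whose $t^L$-coefficient is exactly $m_L$ by the definition of $m_L$ as a sum over partitions of $L$. Differentiating in $z$ gives $\partial_z F=\bigl(\sum_j\kappa_j'\,t^j\bigr)F$, hence $m_L'=\sum_{r=1}^{L}\kappa_r'\,m_{L-r}$; differentiating in $t$ gives $\partial_t F=\bigl(\sum_j j\kappa_j\,t^{j-1}\bigr)F$, hence $L\,m_L=\sum_{r=1}^{L} r\kappa_r\,m_{L-r}$. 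Substituting both into the previous display makes the coefficient of $S$ vanish and reduces it to $m_L'=m_L'$. Invertibility of $\Phi$ is immediate since $M$ is unipotent upper triangular and $D$ is invertible on $\mathbb{C}\setminus\{a_1,\dots,a_N\}$. The main obstacle is the combinatorial bookkeeping needed to translate the matrix identity into the scalar recurrence above and to recognize the exponential-generating-function structure of $M$; once that structure is identified, the two differentiations of $F$ provide precisely the two ingredients needed to finish.
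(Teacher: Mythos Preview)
Your proposal is correct and takes a genuinely different route from the paper.  Both arguments hinge on Lemma~\ref{propcalculsimp}, but the paper works with $\Phi_z\Phi^{-1}=M_zM^{-1}+MD_zD^{-1}M^{-1}$, which forces it first to compute $M^{-1}$ explicitly (Lemma~\ref{prop1}) and then to evaluate $(M_zM^{-1})_{kl}$ and $(MD_zD^{-1}M^{-1})_{kl}$ by brute-force partition combinatorics, culminating in the delicate analysis of the quantity $C(q,t)$ via binomial identities.  The two facts the paper extracts from all this, namely $(M_zM^{-1})_{kl}=\kappa'_{l-k}$ and $\sum_{r}rM_{k,k+r}(M^{-1})_{k+r,l}=(l-k)\kappa_{l-k}$, are precisely your identities $m_L'=\sum_r\kappa_r'm_{L-r}$ and $Lm_L=\sum_r r\kappa_r m_{L-r}$ after multiplying through by $M$.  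Your recognition that $m_L$ is the $t^L$-coefficient of $\exp\bigl(\sum_j\kappa_j t^j\bigr)$ replaces several pages of partition bookkeeping by two one-line differentiations, and it also makes the computation of $M^{-1}$ unnecessary.  The trade-off is that the paper's approach keeps the structure visible at the matrix level and yields $M^{-1}$ as a by-product, whereas your generating-function argument is shorter and more conceptual but hides the Toeplitz structure behind the single variable $t$.  One small point worth making explicit in a write-up: $F$ should be read as a formal power series (or truncated at degree $p-1$), since only $\kappa_1,\dots,\kappa_{p-1}$ are actually defined.
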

Note that as $z$ varies, the contours $\gamma_j$ may be adjusted in a continuous way to avoid the points $P\in X_a$ with $\zeta(P)=z$ while being elements of the homology $H_1(X_a\setminus(\{\infty_1,\ldots,\infty_s\}\cup\{P_{a_1},\ldots,P_{a_N}\}\cup \{P\in X_a\,|\, \zeta(P)=z\}))$.
In this way, starting from some $z_0\in\mathbb C\setminus\{a_1, \dots, a_N\}$, a set of contours from Theorem \ref{cjtpiche}, and the matrix $\Phi(z_0)$ \eqref{productMD}-\eqref{M}, we obtain $\Phi(z)$ for any $z\in\mathbb C\setminus\{a_1, \dots, a_N\}$ by analytical continuation of $\Phi(z_0)$.

Solutions given by Theorem \ref{cjtpiche} were first obtained in Theorems 1-3 of \cite{Ghabra22} for some special cases of rational residue matrices $B^{(i)}$ \eqref{bi} from \cite{DragGontShram}. Those cases are included among the solutions given by Corollaries \ref{cor_rat1} and \ref{cor_rat2} in Section \ref{sect_examples} below.

\begin{example}
To illustrate the statement of the theorem we spell out the solutions in the matrix form for $p=5$. 
For $i=1, \dots, N,$ the $5\times 5$ matrices $B^{(i)}$ given by Theorem \ref{thmDGS} for some constants $\theta_i\in\mathbb C$ is
\begin{equation}\label{bj5x5}
     B^{(i)} = \begin{pmatrix}
     \theta_i & \oint_{\gamma_1} \frac{w^{n}}{\zeta - a_i} \,\mathrm{d}\zeta & \oint_{\gamma_2} \frac{w^{2n}}{\zeta - a_i} \,\mathrm{d}\zeta & \oint_{\gamma_3} \frac{w^{3n}}{\zeta - a_i} \,\mathrm{d}\zeta & \oint_{\gamma_4} \frac{w^{4n}}{\zeta - a_i} \,\mathrm{d}\zeta\\ 
     0 &  \theta_i - \frac{n}{m} & \oint_{\gamma_1} \frac{w^{n}}{\zeta - a_i} \,\mathrm{d}\zeta & \oint_{\gamma_2} \frac{w^{2n}}{\zeta - a_i} \,\mathrm{d}\zeta & \oint_{\gamma_3} \frac{w^{3n}}{\zeta - a_i} \,\mathrm{d}\zeta \\
     0 & 0 & \theta_i - \frac{2n}{m} & \oint_{\gamma_1} \frac{w^{n}}{\zeta - a_i} \,\mathrm{d}\zeta & \oint_{\gamma_2} \frac{w^{2n}}{\zeta - a_i} \,\mathrm{d}\zeta \\
     0 & 0 & 0 & \theta_i - \frac{3n}{m} & \oint_{\gamma_1} \frac{w^{n}}{\zeta - a_i} \,\mathrm{d}\zeta \\
     0 & 0 & 0 & 0 & \theta_i -\frac{4n}{m}
 \end{pmatrix}.
\end{equation} 
Theorem \ref{cjtpiche} gives the following fundamental solution to the Fuchsian system \eqref{fuchsyst1} with the coefficients $B^{(i)}$ \eqref{bj5x5}:
\begin{equation*}
    \Phi(z) = MD,
\end{equation*}
where
\begin{equation*}
    M = \begin{pmatrix}
        1 & \psi_1 & \psi_2 & \psi_3 & \psi_4 \\
        0 & 1 & \psi_1 & \psi_2 & \psi_3 \\
        0 & 0 & 1 & \psi_1 & \psi_2 \\
        0 & 0 & 0 & 1 & \psi_1 \\
        0 & 0 & 0 & 0 & 1 & 
    \end{pmatrix} 
\end{equation*}
and
\begin{equation*}
    D = \mathrm{diag}\left(\prod_{i=1}^N(z - a_i)^{\theta_i}, \prod_{i=1}^N(z - a_i)^{\theta_i - \frac{n}{m}}, \ldots, \prod_{i=1}^N(z - a_i)^{\theta_i - \frac{4n}{m}} \right).
\end{equation*}
The entries $\psi_j$ of the matrix $M$ are the following functions of $z$ expressed in terms of $\kappa_j$ given by Definition \ref{dfnkappa}:
\begin{align*}
    \psi_1 &= \kappa_1,\\
    \psi_2 &= \kappa_2 + \frac{1}{2}\kappa_1^2,\\
    \psi_3 &= \kappa_3 + \kappa_1\kappa_2 + \frac{1}{6}\kappa_1^3,\\
    \psi_4 &= \kappa_4 + \kappa_1\kappa_3 + \frac{1}{2}\kappa_2^2 + \frac{1}{2}\kappa_1^2\kappa_2 + \frac{1}{24}\kappa_1^4.
\end{align*}
\end{example}

The contour $\gamma_j$ for each $j=1, \dots, p-1$ in Theorem \ref{cjtpiche} is the same as the respective contour $\gamma_j$ in Theorem \ref{thmDGS} used to construct the matrices $B^{(i)}$ serving as coefficients in \eqref{fuchsyst1}. However, each set of contours $\gamma_1, \dots, \gamma_{p-1}$ corresponding to a fixed solution $B^{(1)}, \dots, B^{(N)}$ \eqref{condetoile}, \eqref{bi} of the Schlesinger system and thus to a fixed Fuchsian system \eqref{fuchsyst1}, gives rise to several different solutions of the Fuchsian system (corresponding to different initial conditions for \eqref{fuchsyst1}). This happens because, as stated in Theorem \ref{cjtpiche}, the contours $\gamma_j$ are elements of the homology of the surface punctured at the points $P\in X_a$ for which $\zeta(P)=z_0.$ In other words, a given contour $\gamma_j$ in  $H_1(X_a\setminus\{\infty_1,\ldots,\infty_s, P_{a_1},\ldots,P_{a_N}\},\mathbb C)$ will give different values of the integral $\kappa_j$ \eqref{kappa} depending on the positions of the points of the surface satisfying $\zeta(P)=z_0$ relative to the contour $\gamma_j.$ Here we assume that initial value for $\Phi$ is given at $z=z_0$. Let us illustrate this with a simple example. 

\begin{example}
\label{example_residues}
Let all contours $\gamma_j$ for $j=1, \dots, p-1$ be equal to the same trivial element $\gamma$ of $H_1(X_a\setminus\{\infty_1,\ldots,\infty_s, P_{a_1},\ldots,P_{a_N}\},\mathbb C)$. In this case, the matrices $B^{(i)}$ given by Theorem \ref{thmDGS} are diagonal: 
\begin{equation}
\label{Bi_example}
B^{(i)}={\rm diag}\left(\theta_i, \theta_i-\frac{n}{m}, \theta_i-\frac{2n}{m}, \dots, \theta_i-\frac{(p-1)n}{m}\right).
\end{equation}
Consider now the points $P\in X_a$ for which $\zeta(P)=z$ for some $z\in\mathbb C\setminus\{a_1, \dots, a_N\}$. There are $m$ such points and we denote them by $P_z^{(1)}, \dots, P_z^{(m)}.$ The integrals $\kappa_j$ \eqref{kappa} computed with $\gamma_j=\gamma$ depend on how many of the points $P_z^{(k)}$ belong to the interior of $\gamma$, that is to the domain delimited by the contour $\gamma$ homologous to zero on $X_a$.

Let us introduce the shorthand notation
\begin{equation}
\label{Pi}
\Pi=\prod_{i=1}^N(z-a_i)\qquad\mbox{and}\qquad\Pi^\theta=\prod_{i=1}^N(z-a_i)^{\theta_i}
\end{equation}
for $\theta:=(\theta_1, \dots, \theta_N)\in\mathbb C^N$ and consider two cases: 
\begin{itemize}
\item[Case 1: ] The contour $\gamma$ does not enclose any of the points $P_z^{(k)}$. In other words, $\gamma$ is a trivial element of $H_1(X_a\setminus(\{P_{a_1},\ldots,P_{a_N}, \infty_1, \dots, \infty_s\}\cup \{P\in X_a\,|\, \zeta(P)=z\}),{\mathbb C})$. In this case each of the $\kappa_j$ with $\gamma_j=\gamma$ vanishes and the solution $\Phi(z)$ given by Theorem \ref{cjtpiche} is diagonal:  $M$ \eqref{M} is the identity matrix and $\Phi(z)=D:$
\begin{equation}
\label{Phidiag}
\Phi(z)=\mathrm{diag}\left(\Pi^{\theta}, \Pi^{\theta-\frac{n}{m}}, \ldots, \Pi^{\theta-\frac{(p-1)n}{m}}\right),
\end{equation}
where we use $\theta-\frac{n}{m}:=(\theta_1-\frac{n}{m}, \dots, \theta_N-\frac{n}{m}).$
\item[Case 2:] The contour $\gamma$ contains $t< m$ points $P_z^{(1)}, \dots, P_z^{(t)}$ in its interior. In this case, we have for $\kappa_j$
\begin{equation}
\label{kappa_example}
\kappa_j := -\frac{m}{jn}\oint_{\gamma}\frac{w^{jn}}{\zeta - z}\mathrm{d}\zeta=-\frac{m}{jn}2\pi\i\sum_{r=1}^t\underset{P_z^{(r)}}{\rm res}\frac{w^{jn}}{\zeta - z}\mathrm{d}\zeta=-\frac{m}{jn}2\pi\i\sum_{r=1}^t \epsilon_r\Pi^{\frac{jn}{m}},
\end{equation}
where $\epsilon_r$ are different $m$th roots of unity and $\Pi^{\frac{jn}{m}}$ is some fixed holomorphic branch of this multivalued function near $z$. Let us, for simplicity of notation, denote $\alpha_j:=-\frac{m}{jn}2\pi\i\sum_{r=1}^t \epsilon_r$ so that $\kappa_j=\alpha_j\Pi^{\frac{jn}{m}}.$ Now, for the entries above the diagonal of the matrix $M$ given by \eqref{M}, we have: 
\begin{equation*}
M_{kl}=\sum_{q \, \vdash \, (l - k)}\,\prod_{j=1}^{l-k}\frac{\left(\alpha_j\Pi^{\frac{jn}{m}}\right)^{\sigma_q(j)}}{\sigma_q(j)!} = \hat \alpha_{l-k}\Pi^{\frac{(l-k)n}{m}}
\end{equation*}
with the new constants $\hat \alpha_j.$ This gives the following solution to the Fuchsian system \eqref{fuchsyst1} with coefficients $\eqref{Bi_example}$:
\begin{equation}
\label{Phiresidues}
   \Phi(z) = \begin{pmatrix}
        \Pi^\theta & \hat\alpha_1\Pi^\theta & \hat\alpha_2\Pi^\theta & \dots & \hat\alpha_{p-1}\Pi^\theta \\
        0 & \Pi^{\theta-\frac{n}{m}} & \hat\alpha_1\Pi^{\theta-\frac{n}{m}} & \dots &\hat\alpha_{p-2}\Pi^{\theta-\frac{n}{m}}  \\
        0 & 0 &  \Pi^{\theta-\frac{2n}{m}} &\dots & \hat\alpha_{p-3}\Pi^{\theta-\frac{2n}{m}}  \\
        \vdots & \vdots & \vdots & \ddots & \vdots \\
        0 & 0 & 0 & 0 &  \Pi^{\theta-\frac{(p-1)n}{m}} & 
    \end{pmatrix},
\end{equation}
or $\Phi_{kl}=\hat \alpha_{l-k}\Pi^{\theta-\frac{(k-1)n}{m}}$ for $l\geqslant k$ and $\Phi_{kl}=0$ otherwise.
\end{itemize}
We thus obtain two solutions \eqref{Phidiag} and \eqref{Phiresidues} for the same system \eqref{fuchsyst1} with coefficients \eqref{Bi_example}. These two fundamental matrices are related, as expected, by the multiplication from the right with a constant upper triangular unipotent matrix $C=(C_{ij})$ for which $C_{ij} = \hat\alpha_{j-i}$ for $j>i$. 
\end{example}

\bigskip

To prepare for the proof of Theorem \ref{cjtpiche}, we first establish the following lemma.
\begin{lemma}\label{prop1}
    Let $M$ and $D$ be the upper-triangular $p \times p$ matrices defined in Theorem \ref{cjtpiche} and $\kappa'_r$ be the derivative \eqref{derivativekappa} of $\kappa_r$ with respect to $z$. Then,
    \begin{enumerate}
        \item $\frac{\mathrm{d}}{\mathrm{d}z}D = D_z = \mathrm{diag}\left( \sum_{i=1}^N\frac{\beta_1^{(i)}}{z-a_i}\prod_{j=1}^N(z - a_j)^{\beta_1^{(j)}}, \ldots,  \sum_{i=1}^N\frac{\beta_p^{(i)}}{z-a_i}\prod_{j=1}^N(z-a_j)^{\beta_p^{(j)}}\right)$,
        \item $\frac{\mathrm{d}}{\mathrm{d}z}M = M_z$  for which the $(kl)$-entry is
        \begin{equation*}
            (M_z)_{kl} = \begin{dcases*}\sum_{q \, \vdash \, (l - k)}\sum_{r=1}^{l-k}\left(\frac{\kappa_r'}{\kappa_r}\sigma_q(r)\prod_{j=1}^{l-k}\frac{\kappa_j^{\sigma_q(j)}}{\sigma_q(j)!}\right) \quad &\text{when } $l>k$,\\  
    \quad 0 \quad &\text{when } $l \leq k$,
    \end{dcases*}
        \end{equation*}
        \item $D^{-1} = \mathrm{diag}\left(\prod_{j=1}^N(z-a_j)^{-\beta_1^{(j)}}, \ldots, \prod_{j=1}^N(z-a_j)^{-\beta_p^{(j)}}\right)$, and
        \item the inverse of the matrix $M$ has the following entries
        \begin{equation}
        \label{M-1}
            (M^{-1})_{kl} = \begin{dcases*}\sum_{q \, \vdash \, (l - k)}\,\prod_{j=1}^{l-k}\frac{(-\kappa_j)^{\sigma_q(j)}}{\sigma_q(j)!} \quad &\text{when } $l>k$,\\ 
    \quad 1 \quad &\text{when } $l= k$,\\ 
    \quad 0 \quad &\text{when } $l < k$.
    \end{dcases*}
        \end{equation}
    \end{enumerate}
\end{lemma}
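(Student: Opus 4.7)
My approach treats the four parts in increasing order of difficulty. Parts (1) and (3) are essentially immediate. For (3), $D$ is diagonal with nonvanishing entries, so $D^{-1}$ is obtained by reciprocating each diagonal entry. For (1), logarithmic differentiation of $\prod_{i=1}^N (z-a_i)^{\beta_k^{(i)}}$ gives $\bigl(\sum_{i=1}^N \beta_k^{(i)}/(z-a_i)\bigr)\prod_{i=1}^N (z-a_i)^{\beta_k^{(i)}}$, which is exactly the stated expression.

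For (2), I would differentiate $M_{kl}$ termwise in $z$. Each term in the sum over partitions is a finite product $\prod_{j=1}^{l-k}\kappa_j^{\sigma_q(j)}/\sigma_q(j)!$ of functions of $z$, and the product rule produces $\sum_{r=1}^{l-k}\sigma_q(r)\kappa_r^{\sigma_q(r)-1}\kappa_r'/\sigma_q(r)!\cdot \prod_{j\neq r}\kappa_j^{\sigma_q(j)}/\sigma_q(j)!$. Simplifying $\sigma_q(r)/\sigma_q(r)! = 1/(\sigma_q(r)-1)!$ and using the convention $1/(-1)! = 0$ to kill those $r$ not appearing in $q$ yields the formula stated in the lemma.

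The heart of the lemma is (4), and this is the main obstacle. The key observation is that $M$ is an upper-triangular Toeplitz matrix, $M_{kl} = \psi_{l-k}$ for $l\geq k$, where $\psi_0 = 1$ and $\psi_n = \sum_{q\vdash n}\prod_{j}\kappa_j^{\sigma_q(j)}/\sigma_q(j)!$. Expanding the formal power series $\exp\bigl(\sum_{j\geq 1}\kappa_j t^j\bigr) = \prod_{j\geq 1}\sum_{k\geq 0}(\kappa_j t^j)^k/k!$ and reading off the coefficient of $t^n$ recovers exactly $\psi_n$, since choosing exponents $k_j = \sigma_q(j)\geq 0$ with $\sum_j j k_j = n$ is precisely a partition of $n$. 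Consequently, the matrix $N$ defined by the right-hand side of \eqref{M-1} is the upper-triangular Toeplitz matrix whose generating series is $\exp\bigl(-\sum_{j\geq 1}\kappa_j t^j\bigr)$. Since upper-triangular Toeplitz $p\times p$ matrices form a commutative algebra isomorphic to $\mathbb C[t]/(t^p)$ via the generating-series map (matrix multiplication corresponds to convolution of coefficients, hence to polynomial multiplication modulo $t^p$), and since $\exp\bigl(\sum_j \kappa_j t^j\bigr)\cdot \exp\bigl(-\sum_j \kappa_j t^j\bigr) = 1$, we conclude $MN = NM = I$, proving (4). Equivalently, one could verify the convolution identity $\sum_{t=0}^{s}\psi_t\tilde\psi_{s-t} = \delta_{s,0}$ (with $\tilde\psi_n$ the proposed inverse coefficients) by induction on $s$, but the exponential-formula route is both shorter and makes the form of $M^{-1}$ feel inevitable.
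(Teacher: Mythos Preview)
Your proof is correct. Parts (1)--(3) are handled identically to the paper, which dismisses them as ``straightforward observation.'' For part (4), however, your route is genuinely different from the paper's. The paper computes $(MM^{-1})_{kl}$ directly: it writes the matrix product as a double sum over pairs of partitions $(r_1,r_2)$ with $r_1\vdash(n-k)$ and $r_2\vdash(l-n)$, regroups by the combined partition $q=r_1+r_2\vdash(l-k)$, interchanges a product and a sum, and then recognizes the inner sum over splittings as the binomial expansion of $(\kappa_j-\kappa_j)^{\sigma_q(j)}=0$. Your argument instead identifies $M$ and the proposed $M^{-1}$ as upper-triangular Toeplitz matrices corresponding, via the algebra isomorphism with $\mathbb C[t]/(t^p)$, to the truncations of $\exp\bigl(\sum_j\kappa_jt^j\bigr)$ and $\exp\bigl(-\sum_j\kappa_jt^j\bigr)$; their product is then immediate. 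Your approach is shorter and more conceptual---it explains \emph{why} the inverse has the same form with $\kappa_j\mapsto-\kappa_j$, rather than merely verifying it---while the paper's computation is more self-contained, requiring no appeal to the exponential formula or the Toeplitz-algebra structure. Both are valid; yours would likely be preferred by readers comfortable with generating functions.
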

\begin{proof} Items (1) - (3) of the lemma being obtained by  straightforward observation, let us prove item (4).
      We want to show that
        \begin{equation}
        \label{Minverse}
            (MM^{-1})_{kl} =  \delta_{lk}
        \end{equation}
	with $M^{-1}$ given in the statement of the proposition. Since both $M$ \eqref{M} and $M^{-1}$ \eqref{M-1} are upper triangular with 1's on the main diagonal, the same holds for their product: $(MM^{-1})_{kl}  = \delta_{lk}$ for $l\leqslant k.$ Let us consider the case when $l > k$. Using that $M_{kn}=(M^{-1})_{kn}=0$ for $k>n$,  we have
        \begin{align*}
            (MM^{-1})_{kl} &= \sum_{n=1}^{p}M_{kn}(M^{-1})_{nl} 
            = \sum_{n=k}^{l}M_{kn}(M^{-1})_{nl}\\
            &= \sum_{n = k}^{l}\left[\left(\sum_{r_1 \, \vdash \, (n - k)}\,\prod_{j_1=1}^{n-k}\frac{\kappa_{j_1}^{\sigma_{r_1}(j_1)}}{\sigma_{r_1}(j_1)!}\right)\left(\sum_{r_2 \, \vdash \, (l - n)}\,\prod_{j_2=1}^{l-n}\frac{(-\kappa_{j_2})^{\sigma_{r_2}(j_2)}}{\sigma_{r_2}(j_2)!}\right)\right].
        \end{align*}
Let us show that this expression is zero. Let $T$ be a strictly positive integer and $K, L\geqslant 0.$ Using that $\sigma_q(j)=0$ for $q\vdash T$ and $j>T$, note that
\begin{multline*}
\sum_{K+L = T } \left( \sum_{r_1 \vdash K } \prod_{j=1}^K \frac{\kappa_j^{\sigma_{r_1}(j)}}{\sigma_{r_1}(j)!} \right)\left( \sum_{r_2 \vdash L } \prod_{j=1}^L \frac{(-\kappa_j)^{\sigma_{r_2}(j)}}{\sigma_{r_2}(j)!} \right)
= \sum_{K+L = T } \sum_{r_1 \vdash K } \sum_{r_2 \vdash L } \prod_{j=1}^T \frac{\kappa_j^{\sigma_{r_1}(j)}(-\kappa_j)^{\sigma_{r_2}(j)}}{\sigma_{r_1}(j)!\sigma_{r_2}(j)!} \\
= \sum_{q \vdash T }  \sum_{\substack{(r_1, r_2)\\ r_1 + r_2 = q }} \prod_{j=1}^T \frac{\kappa_j^{\sigma_{r_1}(j)}(-\kappa_j)^{\sigma_{r_2}(j)}}{\sigma_{r_1}(j)!\sigma_{r_2}(j)!}.
\end{multline*}
Here $r_1 + r_2 $ stands for the sum of two partitions combining the parts of $r_1$ and $r_2$, such that $\sigma_{r_1}(j)+\sigma_{r_2}(j)=\sigma_{r_1 + r_2}(j) $ for any $j$ and for two partitions $q_1$ and $q_2$ the pairs $(q_1,q_2)$ and $(q_2, q_1)$ are considered different. 
Since the numbers $\sigma_{r_1}(1), \dots \sigma_{r_1}(T)$ and $\sigma_{r_2}(1), \dots \sigma_{r_2}(T)$ define the partition $r_1$ and $r_2$ completely, we have for every fixed partition $q\vdash T$
\begin{equation}
\label{prod-sum}
\prod_{j=1}^T \,\sum_{ l_j+m_j = \sigma_q(j)}  \frac{\kappa_j^{l_j}(-\kappa_j)^{m_j}}{l_j!m_j!}
=   \sum_{\substack{(r_1, r_2)\\ r_1 + r_2 = q }} \prod_{j=1}^T \frac{\kappa_j^{\sigma_{r_1}(j)}(-\kappa_j)^{\sigma_{r_2}(j)}}{\sigma_{r_1}(j)!\sigma_{r_2}(j)!}.
\end{equation}
Thus, continuing our calculation, we obtain
\begin{multline}
\label{temp}
\sum_{K+L = T } \left( \sum_{r_1 \vdash K } \prod_{j=1}^K \frac{\kappa_j^{\sigma_{r_1}(j)}}{\sigma_{r_1}(j)!} \right)\left( \sum_{r_2 \vdash L } \prod_{j=1}^L \frac{(-\kappa_j)^{\sigma_{r_2}(j)}}{\sigma_{r_2}(j)!} \right)
\\
= \sum_{q \vdash T } \prod_{j=1}^T \frac{1}{\sigma_q(j)!}\sum_{ l_j+m_j = \sigma_q(j)}  \frac{\sigma_q(j)!}{l_j!m_j!}\kappa_j^{l_j}(-\kappa_j)^{m_j} 
= \sum_{q \vdash T } \prod_{j=1}^T \frac{1}{\sigma_q(j)!}(\kappa_j-\kappa_j)^{\sigma_q(j)}=0,
\end{multline}
where we set $(\kappa_j-\kappa_j)^0=1$ for notational convenience. 
For $T=l-k$, this implies \eqref{Minverse} with  $l>k$ and thus proves that \eqref{M-1} defines the inverse of the matrix $M$ \eqref{M}.
\end{proof}
We now have all the necessary tools to prove Theorem \ref{cjtpiche}.
\\

\begin{customproof}[Theorem \ref{cjtpiche}]
We want to show that $\Phi = MD$
where both $M$ and $D$ are defined in Theorem \ref{cjtpiche}, is a fundamental solution of the Fuchsian system \eqref{fuchsyst1}, i.e.
\begin{equation*}
\Phi_z \Phi^{-1} = \sum_{i=1}^N\frac{1}{z-a_i}B^{(i)},
\end{equation*}
where $ \Phi_z = \frac{\mathrm{d}}{\mathrm{d}z}\Phi.$ Differentiating \eqref{productMD}, we have
\begin{equation}
\label{derivative}
    \Phi_z \Phi^{-1} = (M_z D + MD_z)D^{-1}M^{-1} = M_z M^{-1} + MD_z D^{-1}M^{-1}.
\end{equation}
The matrix $M_z M^{-1}$ is upper triangular as a product of upper triangular matrices. Its diagonal entries are zeros since this is the case for $M_z$, see Lemma \ref{prop1}. 
Let us consider the entries of $M_z M^{-1}$ lying above the main diagonal.   Fix $k<l$ and set $j=l-k$.  
By Lemma \ref{prop1}, for $r\geqslant 1$, we have
\begin{equation*}
 (M_z)_{k,k+r}
 = \sum_{q_1\vdash r} \sum_{t=1}^r 
 \frac{\kappa_t'}{\kappa_t}\;\sigma_{q_1}(t) \prod_{i=1}^r\frac{\kappa_i^{\sigma_{q_1}(i)}}{\sigma_{q_1}(i)!}
\end{equation*}
and for $1\leqslant r < j$
\begin{equation}
\label{M-1r}
(M^{-1})_{k+r,k+j}
  = \sum_{q_2\vdash (j-r)}
    \prod_{i=1}^{j-r}\frac{(-\kappa_i)^{\sigma_{q_2}(i)}}{\sigma_{q_2}(i)!},
\end{equation}
hence
\begin{equation}
\label{temp-comb}
  (M_zM^{-1})_{kl}
  = \sum_{r=0}^j \;
    \sum_{q_1\vdash r} \;
    \sum_{q_2\vdash (j-r)} \;
    \sum_{t=1}^r
    \frac{\kappa_t'}{\kappa_t}\,\sigma_{q_1}(t)
    \left(\prod_{i=1}^r\frac{\kappa_i^{\sigma_{q_1}(i)}}{\sigma_{q_1}(i)!} \right)
    \;\prod_{h=1}^{j-r}\frac{(-\kappa_h)^{\sigma_{q_2}(h)}}{\sigma_{q_2}(h)!}.
\end{equation}
In this proof, we assume that there exists one partition of zero. 
Grouping the partitions $q_1$ and $q_2$ into a sum, we obtain the sum over all partitions $q:=q_1+ q_2$ of $j$ with $\sigma_{q_1}(i)+\sigma_{q_2}(i)=\sigma_{q}(i) $.  This gives for $j=l-k\geqslant 1$
\begin{equation*}
  (M_zM^{-1})_{kl}
  = \sum_{q\vdash j} \;\sum_{\substack{(q_1, q_2)\\ q_1 + q_2 = q\vdash j }}\;
   \sum_{t=1}^j
   \frac{\kappa_t'}{\kappa_t}\,\sigma_{q_1}(t) \; \prod_{i=1}^j
   \binom{\sigma_q(i)}{\sigma_{q_1}(i)}(-1)^{\sigma_{q_2}(i)}\frac{\kappa_i^{\sigma_q(i)}}{\sigma_q(i)!} 
\end{equation*}
or, equivalently,
\begin{equation}
\label{MzM-temp}
  (M_zM^{-1})_{kl}
  = \sum_{q\vdash j} \;  \left( \prod_{i=1}^j\frac{\left(\kappa_i\right)^{\sigma_q(i)}}{\sigma_q(i)!}\right)   \sum_{\substack{(q_1, q_2)\\ q_1 + q_2 = q\vdash j }}
   \sum_{t=1}^j
   \frac{\kappa_t'}{\kappa_t}\,\sigma_{q_1}(t) \; \prod_{i=1}^j (-1)^{\sigma_{q_2}(i)}
   \binom{\sigma_q(i)}{\sigma_{q_1}(i)} .
\end{equation}
Let us now consider the contribution, denoted by $C(q,t),$ to \eqref{MzM-temp} of a fixed partition $q\vdash j$ and of a fixed value of $t$. Introduce the vector $\sigma_q=(\sigma_q(1), \dots, \sigma_q(j))$, which defines the partition $q$ uniquely.  We can rewrite the contribution  of  $q$ and $t$ in \eqref{MzM-temp} as a sum over all decompositions of $\sigma_q$ into the sum of two vectors. Then, similarly to \eqref{prod-sum}, this becomes the product of sums (we write $v_k(i)$ for the $i$th component of the vector $v_k$):
\begin{align}
\label{Cqt}
\nonumber
  C(q,t)=\sum_{\substack{(q_1, q_2)\\ q_1 + q_2 = q\vdash j }}
\sigma_{q_1}(t) \; \prod_{i=1}^j (-1)^{\sigma_{q_2}(i)}
   \binom{\sigma_q(i)}{\sigma_{q_1}(i)}
    = \sum_{v_1+v_2=\sigma_q}
v_{1}(t) \; \prod_{i=1}^j (-1)^{v_{2}(i)}
   \binom{\sigma_q(i)}{v_{1}(i)}
   \\
   =\left( \sum_{x_t=0}^{\sigma_q(t)} x_t \binom{\sigma_q(t)}{x_t} (-1)^{\sigma_q(t)-x_t} \right) \prod_{\substack{i=1\\i \ne t}}^j \left( \sum_{x_i=0}^{\sigma_q(i)} \binom{\sigma_q(i)}{x_i} (-1)^{\sigma_q(i)-x_i} \right).
\end{align}
If $\sigma_q(i)\neq 0$, the sum over $x_i$ for $i\neq t$ equals $(1+x)^{\sigma_q(i)}$ evaluated at $x=-1$ and thus vanishes. If $\sigma_q(i)=0$, the sum over $x_i$ is equal to $1$. The sum over $x_t$, if $\sigma_q(t)>0$, can be rewritten as the derivative of $(1+x)^{\sigma_q(t)}$ evaluated at $-1$, therefore it is equal to $1$ if $\sigma_q(t)=1$ and $0$ otherwise. Thus, $C(q,t)$ is non-zero if and only if $q=j$ and $t=j$, that is if $q$ is the partition of $j$ containing one part only, in which case $C(q,t)=1$.
Using this result in \eqref{MzM-temp}, and noting that the only splitting of $j$ into the sum of two partitions that gives a non-zero contribution in \eqref{MzM-temp} is $q_1=j, \; q_2=0$,  we obtain
\begin{equation*}
  (M_zM^{-1})_{kl} =
  \begin{cases}
    0,&l\le k,\\
    \kappa_{\,l-k}',&l>k.
  \end{cases}
\end{equation*}
Let us now consider the term $MD_zD^{-1}M^{-1}$ in \eqref{derivative}. Recalling that  $\beta_h^{(i)}=\beta_k^{(i)}-\tfrac{(h-k)n}m$ and using Lemma \ref{prop1} for $D_z$, one finds that for $l\geqslant k$ 
\begin{align*}
  (MD_zD^{-1}M^{-1})_{kl}
  &= \sum_{i=1}^N\sum_{h=k}^l M_{kh}\,\frac{\beta_h^{(i)}}{\;z-a_i\!}\,(M^{-1})_{hl}
    \\
  &= \sum_{i=1}^N \frac{\beta_k^{(i)}}{z-a_i}\!\sum_{h=k}^lM_{kh}(M^{-1})_{hl}
     \;-\;\frac nm\sum_{i=1}^N\frac1{z-a_i}\sum_{h=k}^l(h-k)M_{kh}(M^{-1})_{hl}
     \\
  &= \sum_{i=1}^N \frac{\beta_k^{(i)}}{z-a_i}\delta_{kl}
     \;-\;\frac nm\sum_{i=1}^N\frac1{z-a_i}\sum_{r=0}^{l-k}rM_{k,k+r}(M^{-1})_{k+r,l}
    .     
\end{align*}
Comparing with our above derivation, we see that the second term  can be rewritten using partitions similarly to \eqref{MzM-temp}, where $j=l-k$. More precisely, the factor of $r$ can be represented as the sum of all the parts of the partition $q_1$, that is $r=\sum_{t=1}^j t\,\sigma_{q_1}(t)$. Then the sum over $r$ in the second term has the form of \eqref{temp-comb} with $\frac{\kappa_t'}{\kappa_t}$ replaced by $t$. In this way, using \eqref{M} for $M$ as well as formulas from \eqref{M-1r} to \eqref{Cqt}, we obtain
\begin{align}
\label{rtemp}
\sum_{r=0}^{l-k}rM_{k,k+r}(M^{-1})_{k+r,l}&=\sum_{q\vdash j} \;  \left( \prod_{i=1}^j\frac{\left(\kappa_i\right)^{\sigma_q(i)}}{\sigma_q(i)!}\right) \!\!\!  \sum_{\substack{(q_1, q_2)\\ q_1 + q_2 = q\vdash j }}\!\!\! \sum_{t=1}^j t\,\sigma_{q_1}(t)
    \prod_{i=1}^j (-1)^{\sigma_{q_2}(i)}
   \binom{\sigma_q(i)}{\sigma_{q_1}(i)}
    \\ \nonumber
   & =
   \sum_{q\vdash j}   \left( \prod_{i=1}^j\frac{\left(\kappa_i\right)^{\sigma_q(i)}}{\sigma_q(i)!}\right) 
   \sum_{t=1}^j t\,C(q,t).
   \end{align}
Given that $C(q,t)$ is non-zero only in the case of one term partition $q=j=l-k$ and $t=j,$ in which case $C(j,j)=1$, we obtain that  quantity \eqref{rtemp} is equal to $j\kappa_j$ and thus
\begin{equation*}
  (MD_zD^{-1}M^{-1})_{kl}=
  \begin{cases}
    0,&l<k,\\
\displaystyle\sum_{i=1}^N\frac{\beta_k^{(i)}}{z-a_i},&l=k,\\
    \displaystyle-\,\frac{(l-k)n}m\,\kappa_{\,l-k}\sum_{i=1}^N\frac1{z-a_i},&l>k.
  \end{cases}
\end{equation*}
Summing the two contributions into \eqref{derivative}, we have
\begin{equation*}
  (\Phi_z\Phi^{-1})_{kl}
  =\begin{cases}
    0,&l<k,\\[3pt]
    \displaystyle\sum_{i=1}^N\frac{\beta_k^{(i)}}{z-a_i},&l=k,\\[4pt]
    \kappa_{l-k}'
    -\displaystyle\frac{({l-k})n}m\,\kappa_{l-k}\sum_{i=1}^N\frac1{z-a_i},&l>k.
  \end{cases}
\end{equation*}
 Recalling that from Lemma \ref{propcalculsimp}, for each $l-k\ge1$, 
\begin{equation*}
\;\kappa_{l-k}' - \frac{({l-k})n}{m}\kappa_{l-k}\sum_{i=1}^N\frac1{z-a_i}
    = \sum_{i=1}^N\frac{B^{(i)}_{k,l}}{z-a_i},
\end{equation*}
we obtain
\begin{equation*}
\Phi_z\Phi^{-1}=\sum_{i=1}^N\frac{B^{(i)}}{z-a_i}.
\end{equation*}
\end{customproof}

\section{Monodromy matrices of solutions}
\label{sect_monodromies}

Let us introduce a basis $\{\mathcal A_1, \dots, \mathcal A_g; \mathcal B_1, \dots, \mathcal B_g\}$ in $H_1(X_a,\mathbb C)$ not passing trough ramification points of the covering $\zeta$ with the intersection indices being $\mathcal A_i\circ \mathcal A_k=\mathcal B_i\circ \mathcal B_k=0$ and $\mathcal A_i\circ \mathcal B_k=1$; such a basis is called {\it canonical}.
The integration contours $\gamma_j$ from Theorem \ref{cjtpiche} are linear combinations of the cycles $\mathcal A_k, \mathcal B_k$ for $k=1,\dots, g$, small loops $\gamma_{a_i}$ encircling the points $P_{a_i}$ for $i=1, \dots,N$, small loops $\gamma_{\infty_\alpha}$ encircling the points $\infty_\alpha$ for $\alpha=1, \dots, s$ as well as small loops $\eta_t$ for $t=1, \dots, m$ each encircling counterclockwise a point $P_z^{(t)}$ for which $\zeta(P)=z.$ Assume that the only pairs of contours that intersect each other are $\mathcal A_k$ and  $\mathcal B_k$ for $k=1, \dots, g.$

Let us identify the $z$- and $\zeta$-spheres and choose generators $\rho_1, \dots, \rho_N$ of the fundamental group $\pi_1(\mathbb C\setminus\{a_1, \dots, a_N\}, z_0)$ in a way so  that $\rho_j$ goes counterclockwise once around the point $z=a_j$  and does not encircle any other branch point  of the covering $\zeta.$

We may analytically continue the fundamental solution $\Phi(z_0)$ from Theorem \ref{cjtpiche} to the Fuchsian system \eqref{fuchsyst1} along the generators $\rho_i$ to obtain another fundamental solution $\Phi_{\rho_i}(z_0)$ of the same system. The system being linear, the two solutions are related by $\Phi_{\rho_i}(z_0)=\Phi(z_0)\mathcal M_i$ for some (monodromy) matrix $\mathcal M_i$. Given that the coefficients $B^{(i)}$ of \eqref{fuchsyst1} solve the Schlesinger system, there exists a fundamental solution to the corresponding Fuchsian system for which the monodromy matrices are independent of small variations of $a_1, \dots, a_N.$

In this section, we calculate the monodromy matrices for fundamental solutions from Theorem \ref{cjtpiche} and show that these solutions are isomonodromic. 

Let us first discuss how the process of analytical continuation of solutions $\Phi(z_0)$ from Theorem \ref{cjtpiche} along the loops $\rho_i$ affects the contours of integration. 
Recall that as $z$ varies, the integration contours deform continuously to avoid the points $P_z^{(t)}$, $t=1, \dots, m,$ for which $\zeta(P)=z$.

The contours $\gamma_{a_i}$ and $\gamma_{\infty_\alpha}$ can be made arbitrarily small and thus are not affected by the analytical continuation of $\Phi(z_0)$ along the loops $\rho_i$. The contours $\eta_t$ for $t=1, \dots, m$ follow the points $P_z^{(t)}$ and thus get permuted during the analytical continuation in question. As for the contours $\mathcal A_k$ and  $\mathcal B_k$, they may remain invariant or may transform by adding several cycles from the set  $\eta_1, \dots, \eta_m$, see Example \ref{ex_contours}. Note that an integration contour cannot transform by adding cycles $\mathcal A_k$ and  $\mathcal B_k$ as $z$ goes around a loop $\rho_i$ since the points $P_z^{(t)}$ during such a transformation do not follow closed cycles. 
\begin{example}
\label{ex_contours}
Consider the compact Riemann surface $X_a$ corresponding to the curve of equation \eqref{curves} with $m=2$. Then the ramified covering $\zeta:X_a \to \mathbb CP^1$ is two-fold. Suppose a cycle $\mathcal A_k$ goes around two ramification points $P_{a_1}$ and $P_{a_2}$ as in the left part of Figure \ref{fig_ex} (note that the figure only shows a part of the covering close to the cycle in question). 
\begin{figure}[h]
\includegraphics[height=4cm, width=6cm]{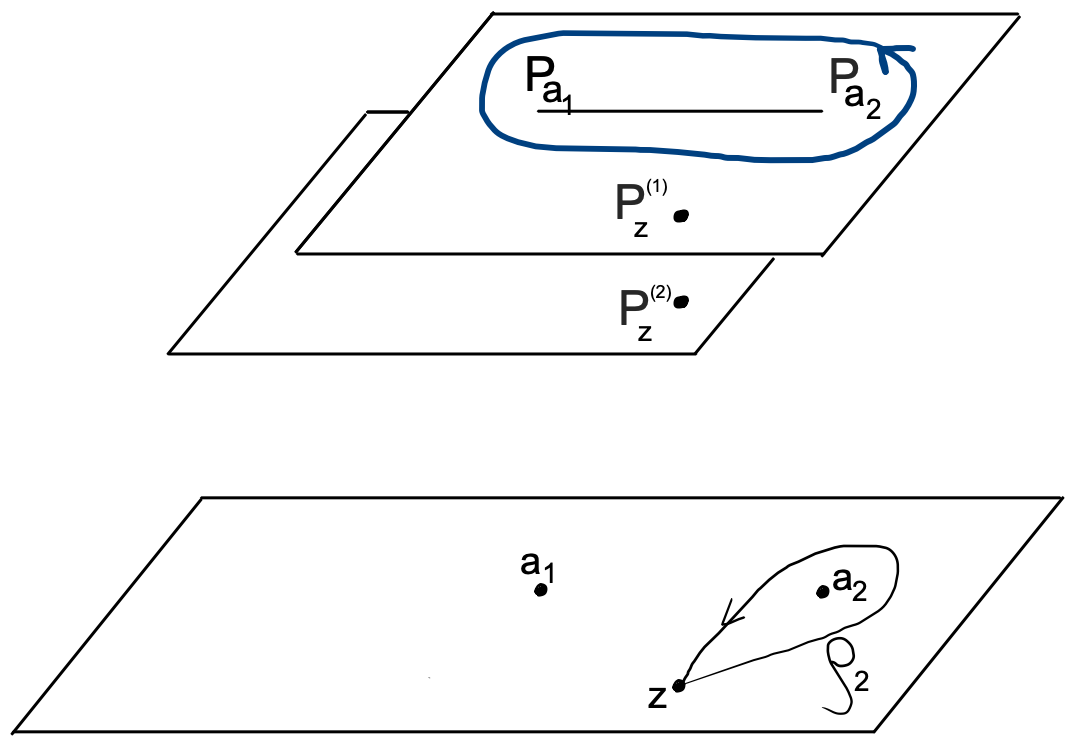}
\hspace{2cm}
\includegraphics[height=4cm, width=6cm]{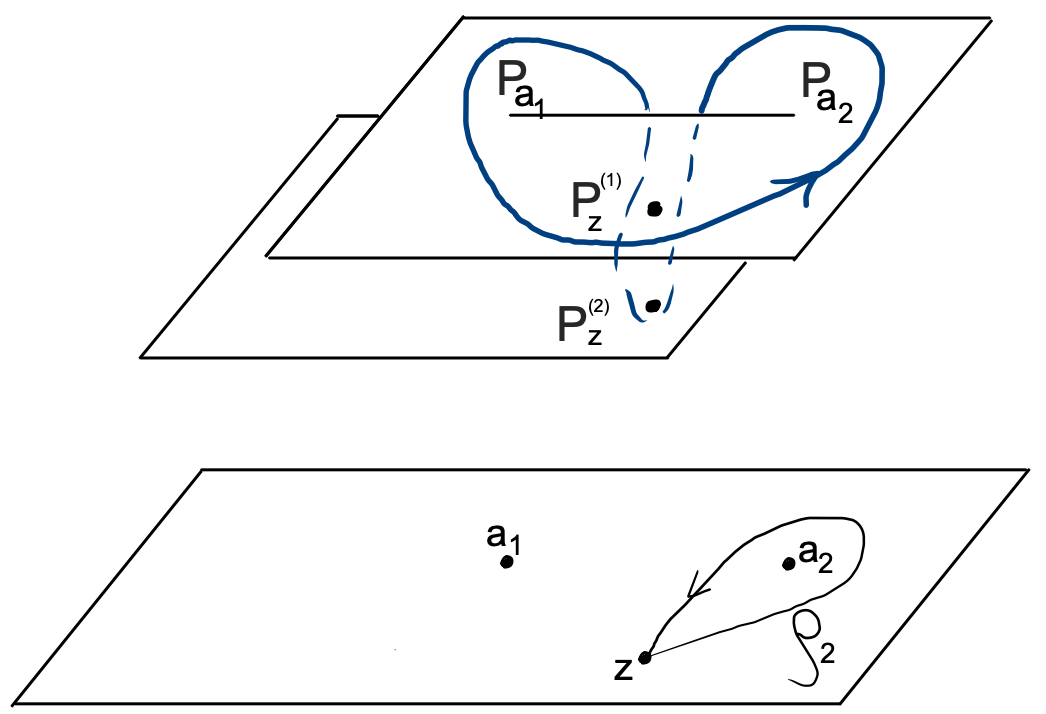}
\caption{ Transformation of a contour as $z$ follows the loop $\rho_2$}
\label{fig_ex}
\end{figure}
In the case of a two-sheeted covering, we may assume 
there is a branch cut between the two ramification points, that is a line by crossing which a path goes from one sheet to the other. The points $P_z^{(1)}$ and $P_z^{(2)}$ belong to two different sheets and project to $\zeta=z.$ The right part of Figure \ref{fig_ex} shows the transformation of $\mathcal A_k$ as $z$ goes once around the loop $\rho_{a_2}$. We see that the transformed contour is equal to $\mathcal A_k+\eta_1-\eta_2$
and thus  the effect on the integral in \eqref{kappa} along $\gamma_r:=\mathcal A_k$ is non-trivial. 
\end{example}

We thus can decompose an integration contour $\gamma_r$ from Theorem \ref{cjtpiche} into a sum of contours of three types: $\gamma_r=\gamma_{r1}+\gamma_{r2}+\gamma_{r3}$, where $\gamma_{r1}$ is a linear combination of contours that do  not transform as $z$ goes around a loop $\rho_i$, the contour $\gamma_{r2}$ is a linear combination of contours $\eta_1, \dots, \eta_m$, and $\gamma_{r3}$ is a linear combination of contours which transform by adding a combination of contours $\eta_t$ (note that linear combinations $\gamma_{rk}$ may be trivial).

This decomposition of $\gamma_r$ induces a decomposition of the function $\kappa_r$ defined by \eqref{kappa}: 
\begin{equation*}
\kappa_r=-\frac{m}{rn} \left( \oint_{\gamma_{r1}}\frac{w^{rn}}{\zeta - z}\mathrm{d}\zeta+ \oint_{\gamma_{r2}}\frac{w^{rn}}{\zeta - z}\mathrm{d}\zeta+ \oint_{\gamma_{r3}}\frac{w^{rn}}{\zeta - z}\mathrm{d}\zeta\right).
\end{equation*}
Note that the integral over $\gamma_{r2}$ is a linear combination of residues at the points $P_z^{(t)}$ calculated in \eqref{kappa_example}. The integral over $\gamma_{r2}$ is thus proportional to $\Pi^{\frac{rn}{m}}.$ 
Under the analytical continuation along $\rho_i$ the integral over $\gamma_{r3}$ transforms by adding a linear combination of residues at the points $P_z^{(t)}$, thus by adding a function proportional to $\Pi^{\frac{rn}{m}}.$ Altogether, we have the following transformation of $\kappa_r$ as $z$ goes around the loop $\rho_i:$
\begin{equation*}
\kappa_r \overset{\rho_i}{\longmapsto } -\frac{m}{rn} \left( \oint_{\gamma_{r1}}\frac{w^{rn}}{\zeta - z}\mathrm{d}\zeta+ \e^{\frac{2\pi\i rn}{m}}\oint_{\gamma_{r2}}\frac{w^{rn}}{\zeta - z}\mathrm{d}\zeta+ \oint_{\gamma_{r3}}\frac{w^{rn}}{\zeta - z}\mathrm{d}\zeta + c_i\Pi^{\frac{rn}{m}}\right)
\end{equation*}
for some constant $c_i\in\mathbb C$, or, equivalently, 
\begin{equation}
\label{kappa-transformation}
\kappa_r \overset{\rho_i}{\longmapsto } \kappa_r  + c_{ri}\Pi^{\frac{rn}{m}},
\end{equation}
where $c_{ri}=-\frac{m}{rn}(c_i+\alpha\e^{\frac{2\pi\i rn}{m}}-\alpha)$ where $\oint_{\gamma_{r2}}\frac{w^{rn}}{\zeta - z}\mathrm{d}\zeta=\alpha\Pi^{\frac{rn}{m}}$ for some $\alpha\in\mathbb C$. 

\begin{thm}
\label{thm_monodromy} 
Let $X_a, a\in D$  be compact Riemann surfaces as in Theorem \ref{thmDGS}. Let $z_0\in\mathbb C\setminus\{a_1, \dots, a_N\}$ and $\gamma_1,\dots, \gamma_{p-1}$ be  closed contours on $X_a$   from $H_1(X_a\setminus(\{\infty_1,\ldots,\infty_s\}\cup\{P_{a_1},\ldots,P_{a_N}\}\cup \{P\in X_a\,|\, \zeta(P)=z_0\}),{\mathbb C})$. Let $\Phi(z_0)$ be the fundamental matrix from Theorem \ref{cjtpiche} constructed using the contours $\gamma_1,\dots, \gamma_{p-1}$ and exponents $\beta_k^{(i)}= \theta_i - \frac{(k-1)n}{m},$ with  $k \in \{1, \ldots, p\}$ and $i \in \{1, \ldots, N\}$. Let $\rho_1, \dots, \rho_N$ be generators of $\pi_1(\mathbb C\setminus\{a_1, \dots, a_N\}, z_0)$ where $\rho_i$ goes anticlockwise once around the point $z=a_i$  and does not encircle any other $a_k$ with $k\neq i.$ Denote by $\Phi_{\rho_i}(z_0)$ the fundamental matrix of system \eqref{fuchsyst1} obtained by analytical continuation of  $\Phi(z_0)$ along   the loop $\rho_i$. 
Let the constants $c_{ri}\in\mathbb C$ be defined by the transformation of the functions $\kappa_r$ \eqref{kappa} under the analytic continuation  along the loop $\rho_i$ as in \eqref{kappa-transformation}. 
Then $\Phi_{\rho_i}(z_0)=\Phi(z_0)\mathcal M_i$ with the monodromy matrix $\mathcal M_i$ given by 
\begin{equation}
\label{Mon-product}
 \mathcal M_i = \hat {\mathcal M}_iD_i,
\end{equation}
where
\begin{equation}
\label{MonD}
    D_i = \mathrm{diag}\left( \e^{2\pi\i\beta_1^{(i)}}, \ldots, \e^{2\pi\i\beta_p^{(i)}}\right),
\end{equation}
and the $(kl)$-entry of $\hat {\mathcal M}_i$ is
\begin{equation}
\label{MonM}
    (\hat {\mathcal M}_i)_{kl} = \begin{dcases*}\sum_{q \, \vdash \, (l - k)}\,\prod_{r=1}^{l-k}\frac{c_{ri}^{\sigma_q(r)}}{\sigma_q(r)!} \quad &\text{when } $l>k$,\\ 
    \quad 1 \quad &\text{when } $l= k$,\\ 
    \quad 0 \quad &\text{when } $l < k$,
    \end{dcases*}
\end{equation}
where, as before,  $q \vdash (l-k)$ represents the partitions of the integer $l - k$,  and $\sigma_q(r)$ is the number of repetitions of the part $r$ in the partition $q$.
\end{thm}
{\it Proof.} Using the notation $\Pi$ introduced in \eqref{Pi} and the vectors $\beta_k=(\beta_k^{(1)}, \dots, \beta_k^{(N)})$, we can write the result of the analytical continuation along the loop $\rho_i$ in the $z$-sphere of the fundamental solution $\Phi(z)$ given by \eqref{productMD}-\eqref{M} of Theorem \ref{cjtpiche} as
\begin{equation*}
\Phi_{\rho_i}(z) = M_{\rho_i}\mathrm{diag}\left(\Pi^{\beta_1}\e^{2\pi\i\beta_1^{(i)}}, \ldots, \Pi^{\beta_p}\e^{2\pi\i\beta_p^{(i)}}\right),
\end{equation*}
where $M_{\rho_i}$ is the corresponding analytical continuation of the matrix $M$ \eqref{M}. Given the transformation \eqref{kappa-transformation} of the functions $\kappa_r$, we have for the $(kl)$-entry of $\Phi_{\rho_i}(z)$ with $l>k:$
\begin{equation}
\label{kl-transformation}
(\Phi_{\rho_i})_{kl} =  \Pi^{\beta_l}\e^{2\pi\i\beta_l^{(i)}}  \sum_{q \, \vdash \, (l - k)}\,\prod_{r=1}^{l-k}\frac{\left(\kappa_r+c_{ri}\Pi^{\frac{rn}{m}}\right)^{\sigma_q(r)}}{\sigma_q(r)!}.
\end{equation}
Let us now compute the $(kl)$-entry of the matrix $\Phi\mathcal M_i$ with $\mathcal M_i$ given by \eqref{Mon-product}-\eqref{MonM}:
\begin{align}
\label{kl-mon}
(\Phi{\mathcal M}_i)_{kl} &=\e^{2\pi\i\beta_l^{(i)}}  \sum_{t=k}^l M_{kt} (\hat{\mathcal M}_i )_{tl}\Pi^{\beta_t} 
\\ \nonumber
&=  \e^{2\pi\i\beta_l^{(i)}} \Pi^{\beta_l} \sum_{t=k}^l \left(\sum_{q_1 \, \vdash \, (t - k)}\,\prod_{r_1=1}^{t-k}\frac{\kappa_{r_1}^{\sigma_{q_1}(r_1)}}{\sigma_{q_1}(r_1)!} \right) 
\left(\sum_{q_2 \, \vdash \, (l - t)}\,\prod_{r=1}^{l-t}\frac{\left(c_{ri}\Pi^{\frac{rn}{m}}\right)^{\sigma_{q_2}(r)}}{\sigma_{q_2}(r)!} \right),
\end{align}
where we represented $\Pi^{\beta_t} = \Pi^{\beta_l} \Pi^{\frac{(l-t)n}{m}}.$ By the argument used to obtain \eqref{temp} with $K=t-k$ and $L=l-t$, we conclude that expression \eqref{kl-mon} is equal to \eqref{kl-transformation}, that is $(\Phi\mathcal M_i)_{kl}=(\Phi_{\rho_i})_{kl}$ for $l>k.$ For $l\leqslant k$, the statement of the theorem is straightforward. 
$\Box$

\bigskip

Given that the constants $c_{ri}$ in \eqref{MonM} are independent of $\{a_1, \dots, a_N\}$,  
Theorem \ref{thm_monodromy} implies that monodromy matrices of the fundamental solution given by Theorem \ref{cjtpiche} are independent of small variations of $a_i$, $i=1, \dots, N$, that is the solution $\Phi(z)$ given by \eqref{productMD}-\eqref{M} is isomonodromic. The monodromy matrices \eqref{Mon-product}-\eqref{MonM} are conjugate to ${\rm exp} \{2\pi\i\,B^{(i)}\}$ as expected according to the general theory of non-resonant Fuchsian systems, see Chapter IV, Theorem 4.1 of \cite{CL}. 
Note that our solutions also transform when $z$ goes around the point at infinity, and the corresponding monodromy matrix $\mathcal M_\infty$ is conjugate to ${\rm exp} \{2\pi\i\,B^{(\infty)}\}$ where $B^{(\infty)}=-\sum_{i=1}^N B^{(i)}$ is a nonzero diagonal matrix, see the end of Section \ref{sect_setup}. A loop around $z=\infty$ is given by a product of loops $\rho_i$ in the fundamental group and thus $\mathcal M_\infty^{-1}$ is obtained as the corresponding product of the matrices $\mathcal M_i$.

\section{Some solutions $\Phi=MD$ with rational $M$}
\label{sect_examples}

As mentioned after Theorem \ref{thmDGS}, the contours of integration may be chosen in a way for the solutions of the Schlesinger system from Theorem \ref{thmDGS} to be given by residues of the differentials \eqref{diffmero}. From \cite{DragGontShram}, we know that for $n>0$, the differentials only have poles at the $s$ points at infinity $\infty_1, \dots, \infty_s$. For coprime $m$ and $N$, there is only one pole at infinity and thus the residue there is zero. For $n<0$, all poles are at the finite ramification points $P_{a_1}, \dots, P_{a_N}.$ In these cases we obtain, respectively, polynomial and rational solutions to the Schlesinger system. For the corresponding solutions $\Phi=MD$ to the Fuchsian system, the matrix $M$ is rational. The solution $\Phi$ is algebraic for rational constants $\theta_i$, $i=1, \dots, N,$ from Theorem \ref{cjtpiche}.

\subsection{Positive values of $n$}
\label{sect_rat1}

Following \cite{DragGontShram}, let us denote $N=sN_1\,,$ $m=sm_1\,$ for $s={\rm gcd}(m, N)$ and recall that $n>0$ should be coprime with $m$.  In this case the differential $\Omega_i^{(j)}(a)$ \eqref{diffmero} has $s$ poles, one at each
of the points $\infty_1,\ldots,\infty_s$ at infinity of the Riemann surface $X_a$.  The behaviour of these differentials at $P=(\zeta,w)\sim \infty_\alpha$ for  $\alpha=1, \dots, s$ in terms of the local parameter $\xi:=\xi_{\infty_\alpha}$ given by \eqref{coordinates} is 
\begin{equation}
\label{omegab}
\Omega_i^{(j)}(a)=\frac{w^{jn} d\zeta}{\zeta-a_i}=\frac{\nu_{jn\alpha}\prod_{k=1}^N(1-a_k\xi^{m_1})^{jn/m}}{\xi^{jnN_1+1}(1-a_i\xi^{m_1})}\,d\xi,
\qquad \nu_{jn\alpha}=-m_1\,e^{2\pi\i jn(\alpha-1)/s}.
\end{equation}
We thus have a pole of order $jnN_1+1$ at $P=\infty_\alpha$ for each $\Omega_i^{(j)}$.  As can be seen from \eqref{omegab} the residue at the pole is nonzero only if $j$ is a multiple of $m_1$. 

Note that the constant overall factor in \eqref{omegab} was unimportant in \cite{DragGontShram} but for our purposes in the present paper, we need to keep track of it. To this end, it is convenient introduce the following notation
\begin{equation*}
%\label{Aj}
A_j^{(\alpha)}:=(-1)^{N\frac{jn}{m}}\nu_{jn\alpha}.
\end{equation*}
Let $\gamma_{\infty_\alpha}$ be  a small contour encircling $\infty_\alpha$. Choosing now $\gamma_{l-k}=\frac{c_{l-k}}{2\pi\i A_{l-k}^{(\alpha)}}\gamma_{\infty_\alpha}$ for some $c_{l-k}\in\mathbb C$, we have a polynomial solution to the Schlesinger system given, for some $\theta_i\in\mathbb C$, by (see Theorem 3 of \cite{DragGontShram})
\begin{equation}
\label{polySchlesinger}
B^{(i)}_{kl}=
\begin{dcases*}     
\frac{c_{l-k}}{A_{l-k}^{(\alpha)}}\;\underset{\infty_\alpha}{\rm res}\,\Omega_i^{(l-k)}(a),\quad\text{if }  (l-k)/m_1\in\mathbb Z \;\text{ and } \; l>k \\ 
0,\qquad\qquad\qquad\qquad\;\;\, \text{if }  (l-k)/m_1\notin\mathbb Z \;\text{ or } \; l<k \\
\theta_i - \frac{(k-1)n}{m}, \qquad\quad \;\;\,\text{if } \; l=k.
\end{dcases*} 
\end{equation}
 The nonzero residue from \eqref{polySchlesinger}, for $(l-k)/m_1\in\mathbb Z $, is explicitly given by \cite{DragGontShram}
\begin{equation*}
%\label{resinfty}
\underset{\infty_\alpha}{\rm res}\,\Omega_i^{(l-k)}(a)= A_{l-k}^{(\alpha)}
\sum_{k_1+\ldots+k_N+\q=N(l-k)\frac{n}{m} }(-1)^\q{(l-k)\frac{n}{m}\choose k_1}\ldots{(l-k)\frac{n}{m}\choose k_N}a_1^{k_1}\ldots a_N^{k_N}a_i^\q, 
\end{equation*}
with the generalized binomial coefficients 
\begin{equation*}
{\beta\choose j}=\frac{\beta(\beta-1)\cdots(\beta-j+1)}{j!}, \qquad {\beta\choose 0}=1
\end{equation*}
defined for $\beta\in\mathbb R$ and $j\in\mathbb N$.  
Solutions obtained in Theorem \ref{cjtpiche} to the Fuchsian system \eqref{fuchsyst1} with polynomial coefficients \eqref{polySchlesinger} are  functions of $z$ given by the next corollary. These solutions are algebraic for rational values of $\theta_i$.

\begin{corollary}
\label{cor_rat1}
Let  $\theta_i \in \mathbb{C}$ for $i=1, \dots, N$ and $p, n, m, N$ be strictly positive integers such that ${\rm gcd}(n,m)=1$  and $s={\rm gcd}(m, N)>1$.  With $c_j\in\mathbb C$ as in \eqref{polySchlesinger}, let $R_j=R_j(z)$ be the polynomial
\begin{equation}
\label{Rj}
R_j(z)=
\begin{dcases*}
-\frac{m\,c_{j}}{jn}\sum_{k_1+\ldots+k_N+\q=N\frac{jn}{m} }(-1)^\q{{jn}/{m}\choose k_1}\ldots{{jn}/{m}\choose k_N}a_1^{k_1}\ldots a_N^{k_N}z^\q ,\quad\text{if }  j/m_1\in\mathbb Z \\ 
0,\qquad\qquad\qquad\qquad\qquad\qquad\qquad\qquad\qquad\qquad\qquad\qquad\qquad\quad\;\text{if }  j/m_1\notin\mathbb Z .
\end{dcases*}
\end{equation}
Denote $\beta_k^{(i)}=\theta_i - \frac{(k-1)n}{m}$ for $k=1, \dots, p$ and $i=1, \dots, N$.
Then the following upper triangular matrix $\Phi(z)$ solves the Fuchsian system \eqref{fuchsyst1} with coefficients $B^{(i)}$ being the upper triangular $p \times p$ matrices given by \eqref{polySchlesinger}:
\begin{equation*}
%\label{ratMD}
    \Phi = MD,
\end{equation*}
where
\begin{equation*}
%\label{ratD}
    D = \mathrm{diag}\left(\prod_{i=1}^N(z-a_i)^{\beta_1^{(i)}}, \ldots, \prod_{i=1}^N(z-a_i)^{\beta_p^{(i)}}\right),
\end{equation*}
and
\begin{equation*}
%\label{ratM}
    M_{kl} = \begin{dcases*}\sum_{q \, \vdash \, (l - k)}\,\prod_{j=1}^{l-k}\frac{R_j^{\sigma_q(j)}}{\sigma_q(j)!} \quad &\text{when } $l>k$,\\ 
    \quad 1 \quad &\text{when } $l= k$,\\ 
    \quad 0 \quad &\text{when } $l < k$,
    \end{dcases*}
\end{equation*}
where $\sigma_q(j)$ is the number of repetitions of the part $j$ in the partition $q$.
\end{corollary}

\begin{proof}
Let $z\in\mathbb C\setminus\{a_1, \dots, a_N\}$. For $n>0$ and the compact Riemann surface $X_a$ corresponding to the superelliptic curve \eqref{curves}, the differentials $\Omega_i^{(j)}$ \eqref{diffmero} have poles at $\infty_1, \dots, \infty_s$ as explained at the beginning of Section \ref{sect_rat1}. Comparing the statements of Corollary \ref{cor_rat1} and  Theorem \ref{cjtpiche}, we only need to prove that $\kappa_j$ from Definition \ref{dfnkappa} coincides with $R_j$ \eqref{Rj} for the choice of the integration contour $\gamma_j = \frac{c_{j}}{2\pi\i A_j^{(\alpha)}}\gamma_{\infty_\alpha}$ with $\gamma_{\infty_\alpha}$ being a small contour encircling $\infty_\alpha\in X_a$ and not containing in its interior any point $P\in X_a$ for which $\zeta(P)=z$. With this choice and $m_1, N_1$ as above, we have
\begin{equation*}
\kappa_j = -\frac{m\,c_j}{jnA_j^{(\alpha)}}\,\underset{\infty_\alpha}{\rm res}\,\frac{w^{jn}\mathrm{d}\zeta}{\zeta - z} = -\frac{m\,c_j}{jnA_j^{(\alpha)}}\,\underset{\xi=0}{\rm res}\,\frac{\nu_{jn\alpha}\prod_{i=1}^N(1-a_i\xi^{m_1})^{jn/m}}{\xi^{jnN_1+1}(1-z\xi^{m_1})}\,d\xi.
\end{equation*}
where $\xi:=\xi_{\infty_\alpha}$ is the local parameter at $\infty_\alpha$  given by \eqref{coordinates} and $\nu_{jn\alpha}$ as in \eqref{omegab}. Therefore, expanding every factor in Taylor series in $\xi\sim 0$,
\begin{align*}
\kappa_j&=&-\frac{m\,c_j}{jnA_j^{(\alpha)}}\,\underset{\xi=0}{\rm res}\,\frac{\nu_{jn\alpha}d\xi}{\xi^{jnN_1+1}}\sum_{k_1=0}^\infty {jn/m\choose k_1} (-a_1\xi^{m_1})^{k_1}\ldots
\sum_{k_N=0}^\infty {jn/m\choose k_N}(-a_N\xi^{m_1})^{k_N}\sum_{\q=0}^\infty(z\xi^{m_1})^\q  \\
 & = & -\frac{m\,c_j}{jnA_j^{(\alpha)}}\,\underset{\xi=0}{\rm res}\,\frac{\nu_{jn\alpha}d\xi}{\xi^{jnN_1+1}}\sum_{r=0}^\infty\Bigl[\sum_{k_1+\ldots+k_N+\q=r}(-1)^{r-\q}
				       {jn/m\choose k_1}\ldots{jn/m\choose k_N}a_1^{k_1}\ldots a_N^{k_N}z^\q\Bigr]\,\xi^{rm_1}.
\end{align*}
To obtain the residue we need to take the term of the series corresponding to the value of $r$ such that $jnN_1=rm_1$. Note that the constant $A_j^{(\alpha)}$ cancels out and we obtain $R_j$ \eqref{Rj}.
\end{proof}

\begin{example}
Let us set, for simplicity, $\theta_1=\theta_2=\frac{1}{4}$, $p=2$, and $ n=1,\;m=2,\;N=2$ so that $s=2$. In this case, $\beta_1=\frac{1}{4}$ and $\beta_2=-\frac{1}{4}$ and the polynomial in $a_1,\, a_2$ solution \eqref{polySchlesinger} to the Schlesinger system  is
\begin{equation*}
B^{(1)} =
\begin{pmatrix}
\frac{1}{4 } &  \frac{c_1}{2}(a_2-a_1)\\
0 & -\frac{1}{4}
\end{pmatrix}, \qquad
B^{(2)} =
\begin{pmatrix}
\frac{1}{4 } &  \frac{c_1}{2}(a_1-a_2)\\
0 & -\frac{1}{4}
\end{pmatrix}.
\end{equation*}
The solution given by Corollary \ref{cor_rat1} to the  corresponding Fuchsian system
  is
\begin{equation*}
    \Phi(z) = \begin{pmatrix}
       (z-a_1)^{\frac{1}{4}}(z-a_2)^{\frac{1}{4}}  & \frac{c_1(2z-a_1-a_2)}{(z-a_1)^{\frac{1}{4}}(z-a_2)^{\frac{1}{4}}} \\ \\
        0 & \frac{1}{(z-a_1)^{\frac{1}{4}}(z-a_2)^{\frac{1}{4}}} 
    \end{pmatrix}.
\end{equation*}
\end{example}

\subsection{Negative values of $n$}
\label{sect_rat2}

Following \cite{DragGontShram}, we chose $n<0$ coprime to $m$. For $n<0$, the differentials $\Omega_i^{(j)}(a)$ \eqref{diffmero} are not singular at the points $\infty_1, \dots, \infty_s$ at infinity and have poles at the finite ramification points $P_{a_1}, \dots, P_{a_N}$ of the coverings $\zeta:X_a\to\mathbb CP^1.$   As can be seen from  definition \eqref{diffmero} of the differentials, their behaviour at the pole $P=(\zeta,w)\sim P_{a_\nu}$ in terms of the standard local parameter $\xi_{a_\nu}$ \eqref{coordinates} is
\begin{equation}
\label{poleak}
\Omega_i^{(j)}=\frac{w^{-j|n|}}{\zeta-a_i}\,d\zeta=\frac m{\xi_{a_\nu}^{j|n|-m+1}(\xi_{a_\nu}^m+a_\nu-a_i)}\prod_{h=1,h\neq \nu}^N(\xi_{a_\nu}^m+a_\nu-a_h)^{-j|n|/m}\,d\xi_{a_\nu}.
\end{equation}
Thus, $\Omega_i^{(j)}$ has a pole of order $j|n|-m+1$ at $P_{a_\nu}$ with $\nu\neq i$ if $j|n|\geqslant m$ and of order $j|n|+1$ at $P_{a_i}.$ Note that, given that $n$ is coprime to $m$, the residues are nonzero only if $j$ is a multiple of $m$. Let us choose contours of integration in \eqref{bi} in the form 
\begin{equation}
\label{contour}
\gamma_{l-k}=\sum_{\nu=1}^N\frac{c^\nu_{l-k}}{2\pi\i}\gamma_{a_\nu}
\end{equation}
where $c^\nu_{l-k}\in\mathbb C$ are arbitrary constants and where $\gamma_{a_\nu}$ is a small contour encircling the point $P_{a_\nu}$. Then  we obtain from Theorem \ref{thmDGS} the following rational solutions to the Schlesinger system given in terms of the residues of $\Omega_i^{(j)}(a)$ at $P_{a_\nu}$
for some $\theta_i\in\mathbb C$ with $i=1, \dots, N$  (see Theorem 4 of \cite{DragGontShram}):
\begin{equation}
\label{ratSchlesinger}
B^{(i)}_{kl}=
\begin{dcases*}     
\sum_{\nu=1}^Nc^\nu_{l-k}\;\underset{P_{a_\nu}}{\rm res}\,\Omega_i^{(l-k)}(a),\quad\;\,  \text{if }  (l-k)/m\in\mathbb Z,  \;\text{ and } \; l>k \\ 
0,\qquad\qquad\qquad\qquad\qquad\; \text{if }  (l-k)/m\notin\mathbb Z \;\text{ or } \; l<k, \\
\theta_i - \frac{(k-1)n}{m}, \qquad\quad \qquad\,\text{if } \; l=k.
\end{dcases*} 
\end{equation}
Note, however, that $\sum_{\nu=1}^N\gamma_{a_\nu}$ corresponds to the sum of residues of the differentials $\Omega_i^{(j)}(a)$ and thus choosing all coefficients $c_{l-k}^\nu$ equal in \eqref{contour} leads to a trivial solution. The nonzero residues from \eqref{ratSchlesinger}, for $(l-k)/m\in\mathbb Z $, are explicitly given by \cite{DragGontShram}
\begin{equation*}
%\label{rationsol1}
\underset{P_{a_\nu}}{\rm res}\,\Omega_i^{(l-k)}(a)=m\hspace{-1cm}\sum_{k_1+\ldots+k_N=(l-k)|n|/m-1}\frac{(-1)^{k_{\nu}}}{(a_{\nu}-a_i)^{k_{\nu}+1}}\prod_{h=1,h\ne\nu}^N\frac{{-(l-k)|n|/m\choose k_h}}{(a_{\nu}-a_h)^{k_h+(l-k)|n|/m}}
\end{equation*}
for $i=1,\ldots,N, \; i\ne\nu$ and
\begin{equation}
\label{rationsol2}
\underset{P_{a_\nu}}{\rm res}\,\Omega_{\nu}^{(l-k)}(a)=m\hspace{-1cm}\sum_{k_1+\ldots+\hat k_\nu+\ldots+k_N=(l-k)|n|/m}\prod_{h=1,h\ne\nu}^N\frac{{-(l-k)|n|/m\choose k_h}}{(a_{\nu}-a_h)^{k_h+(l-k)|n|/m}}
\end{equation}
where the hat symbol over $k_\nu$ indicates that the summation index $k_{\nu}$ is absent in the sum in \eqref{rationsol2}. Solutions obtained in Theorem \ref{cjtpiche} to the Fuchsian system \eqref{fuchsyst1} with rational coefficients \eqref{ratSchlesinger} are given by the next corollary.
\begin{corollary}
\label{cor_rat2}
 Let $n<0,$ and $p, m, N>0$ be integers such that ${\rm gcd}(n,m)=1$. Let  $\theta_i \in \mathbb{C}$ for $i=1, \dots, N$ and denote $\beta_k^{(i)}=\theta_i - \frac{(k-1)n}{m}$ for $k=1, \dots, p$. With $c_j^\nu\in\mathbb C$ as in \eqref{ratSchlesinger}, let $S_j=S_j(z)$ be the rational function 
\begin{equation}
\label{Sj}
S_j(z)=
\begin{dcases*}
\frac{-m^2}{jn}\sum_{\nu=1}^Nc^\nu_j\hspace{-0.3cm}\sum_{k_1+\ldots+k_N=j|n|/m-1}\frac{(-1)^{k_{\nu}}}{(a_{\nu}-z)^{k_{\nu}+1}}\prod_{h=1,h\ne\nu}^N\frac{{-j|n|/m\choose k_h}}{(a_{\nu}-a_h)^{k_h+j|n|/m}} ,\quad\text{if }  j/m\in\mathbb Z \\ 
\hspace{0.3cm}0,\qquad\qquad\qquad\qquad\qquad\qquad\qquad\qquad\qquad\qquad\qquad\qquad\qquad\qquad\quad\text{if }  j/m\notin\mathbb Z .
\end{dcases*}
\end{equation}
Then the following upper triangular matrix $\Phi(z)$ solves the Fuchsian system \eqref{fuchsyst1} with coefficients $B^{(i)},\;i=1, \dots, N,$ being the upper triangular $p \times p$ matrices given by \eqref{ratSchlesinger}:
\begin{equation*}
    \Phi = MD,
\end{equation*}
where
\begin{equation*}
    D = \mathrm{diag}\left(\prod_{i=1}^N(z-a_i)^{\beta_1^{(i)}}, \ldots, \prod_{i=1}^N(z-a_i)^{\beta_p^{(i)}}\right),
\end{equation*}
and
\begin{equation*}
    M_{kl} = \begin{dcases*}\sum_{q \, \vdash \, (l - k)}\,\prod_{j=1}^{l-k}\frac{S_j^{\sigma_q(j)}}{\sigma_q(j)!} \quad &\text{when } $l>k$,\\ 
    \quad 1 \quad &\text{when } $l= k$,\\ 
    \quad 0 \quad &\text{when } $l < k$,
    \end{dcases*}
\end{equation*}
where, as before,  $\sigma_q(j)$ is the number of repetitions of the part $j$ in the partition $q$.
\end{corollary}
\begin{proof}
Let $z\in\mathbb C\setminus\{a_1, \dots, a_N\}$. For $n<0$ and the compact Riemann surface $X_a$ corresponding to the superelliptic curve \eqref{curves}, the differentials $\Omega_i^{(j)}$ \eqref{diffmero} have poles at $P_{a_\nu}$ for $\nu=1, \dots, N$ as shown at the beginning of Section \ref{sect_rat2}. Similarly to the proof of Corollary \ref{cor_rat1}, we need to prove that $\kappa_j$ from Definition \ref{dfnkappa} for the choice of the integration contour $\gamma_j $ as in \eqref{contour} with $l-k=j$  coincides with $S_j$ \eqref{Sj}. We make an additional assumption that none of the contours $\gamma_{a_\nu}$ from \eqref{contour}  contains in its interior any point $P\in X_a$ for which $\zeta(P)=z$. With this choice, in terms of the standard local parameter $\xi_{a_\nu}$ \eqref{coordinates} at $P_{a_\nu}$, we have
\begin{equation*}
\kappa_j = -\frac{m}{jn}\sum_{\nu=1}^Nc^\nu_j\,\underset{P_{a_\nu}}{\rm res}\,\frac{w^{jn}\mathrm{d}\zeta}{\zeta - z} = \frac{-m^2}{jn}\sum_{\nu=1}^Nc^\nu_j\;\underset{\xi_{a_\nu}=0}{\rm res}\;\frac{\prod_{h=1,h\neq \nu}^N(\xi_{a_\nu}^m+a_\nu-a_h)^{-j|n|/m}\,d\xi_{a_\nu}}{\xi_{a_\nu}^{j|n|-m+1}(\xi_{a_\nu}^m+a_\nu-z)},
\end{equation*}
where the choice of the $m$th root should be made in the same way as in \eqref{poleak}. Expanding every factor in Taylor series at $\xi_{a_\nu}\sim 0$, we have
\begin{multline*}
\kappa_j =\frac{-m^2}{jn}\sum_{\nu=1}^Nc^\nu_j\underset{\xi_{a_\nu}=0}{\rm res}\;\frac{\Bigl(1+\frac{\xi_{a_\nu}^m}{a_{\nu}-z}\Bigr)^{-1}}{\xi_{a_\nu}^{j|n|-m+1}(a_{\nu}-z)}\prod_{h=1,h\ne\nu}^N\frac{d\xi_{a_\nu}}{(a_{\nu}-a_h)^{j|n|/m}}\Bigl(1+\frac{\xi_{a_\nu}^m}{a_{\nu}-a_h}\Bigr)^{-j|n|/m}\\
= \frac{-m^2}{jn}\sum_{\nu=1}^Nc^\nu_j\underset{\xi_{a_\nu}=0}{\rm res}\;\frac{d\xi_{a_\nu}}{\xi_{a_\nu}^{j|n|-m+1}}\sum_{r=0}^\infty\Bigl[\sum_{k_1+\ldots+k_N=r}\frac{(-1)^{k_{\nu}}}{(a_{\nu}-z)^{k_{\nu}+1}}\prod_{h=1,h\ne\nu}^N\frac{{-j|n|/m\choose k_h}}{(a_{\nu}-a_h)^{k_h+j|n|/m}}\Bigr]\,\xi_{a_\nu}^{rm}.
\end{multline*}
Now, to obtain the residue, for each $\nu$, we take the term of the series in $\xi_{a_\nu}$ with $r$ such that $rm=j|n|-m$ if such an integer $r$ exists. If $j|n|-m$ is not an integer multiple of $m$, which is equivalent to $j$ not being an integer multiple of $m$ due to the choice of $n$ coprime with $m$, the residue vanishes. This gives us $S_j$ from the statement of the corollary.
\end{proof}

\begin{example}
Let $p=3, \; m=N=2$, and $n=-1$. We thus consider the family of curves given by  equation $w^2=(\zeta-a_1)(\zeta-a_2)$ where the associated Riemann surfaces are of genus zero. 
Given that $1/m\notin\mathbb Z$, the first superdiagonals of $B^{(i)}$ and of $\Phi$ are zero due to \eqref{ratSchlesinger} and \eqref{Sj}, respectively. 
For  any $\theta_1, \theta_2, c\in\mathbb C$ and $c_2^1=0$, $c_2^2=c$,  Corollary \ref{cor_rat2} gives that the matrix
\begin{equation*}
\Phi=
\left(\begin{array}{ccc}
(z-a_1)^{\theta_1}(z-a_2)^{\theta_2} & 0 & \frac{2c}{a_1-a_2} (z-a_1)^{\theta_1+1}(z-a_2)^{\theta_2}
\\
0 & (z-a_1)^{\theta_1+\frac{1}{2}}(z-a_2)^{\theta_2+\frac{1}{2}} & 0 
\\
0 & 0 & (z-a_1)^{\theta_1+1}(z-a_2)^{\theta_2+1}
\end{array}\right)
\end{equation*}
satisfies the system
\begin{equation*}
\frac{d\Phi}{dz}=
\left(\begin{array}{ccc}
\frac{\theta_1}{z-a_1}+\frac{\theta_2}{z-a_2} & 0 & \frac{2c}{(a_1-a_2)^2} \left(\frac{1}{z-a_1}-\frac{1}{z-a_2}\right) 
\\
0 & \frac{\theta_1+\frac{1}{2}}{z-a_1}+\frac{\theta_2+\frac{1}{2}}{z-a_2}  & 0 
\\
0 & 0 & \frac{\theta_1+1}{z-a_1}+\frac{\theta_2+1}{z-a_2}
\end{array}\right)\Phi.
\end{equation*}

\end{example}

\bigskip
\bigskip
{\bf Acknowledgments.} The authors thank the anonymous referees for valuable suggestions and questions which greatly improved our presentation, especially in Section 4 on monodromy representation. 
BP would like to thank Maxime Fortier Bourque (Université de Montréal) for insightful discussions regarding sums over partitions of integers. BP also acknowledges that this research was partially funded by the Fonds de recherche du Québec - section Nature et Technologies (FRQ). The authors thank Renat Gontsov for careful reading of the manuscript and pointing out some errors of notation and terminology, which greatly improved the presentation.  VS gratefully acknowledges
support from the Natural Sciences and Engineering Research Council of Canada through a Discovery grant and from the University of Sherbrooke.

\end{document}